\newcommand*{\IR}{\mathbb{R}}
\theoremstyle{plain}
\newtheorem{theorem}{Theorem}[section]
\newcommand{\notiz}[1]{\relax}
\newcommand{\zitep}[1]{\relax}
\newcommand{\1}{\mathds 1}            
\newcommand{\Price}[1][]{
		\ifthenelse{\equal{#1}{}}{\mathit{Price}}{\Price{}^{#1}}
	} 
\newlength{\wordlength}
\renewcommand{\cite}{\citet}
\numberwithin{equation}{section}
\numberwithin{figure}{section}
\numberwithin{table}{section}
\begin{document}
\title{\textbf{The Chebyshev method for the implied volatility}
}

\bigskip
\author{\textbf{Kathrin Glau$\vphantom{l}^{1}$,} \textbf{Paul Herold$\vphantom{l}^{1}$,} \textbf{Dilip B. Madan$\vphantom{l}^{2}$,} \textbf{Christian P{\"o}tz$\vphantom{l}^{1,}$\footnote{The authors would like to thank the KPMG Center of Excellence in Risk Management for their support.}
}
\\\\$\vphantom{l}^{\text{1}}$Technical University of Munich, Germany,\\
 $\vphantom{l}^{\text{2}}$ Robert H. Smith School of Business, University of Maryland
}

\maketitle
\begin{abstract}
The implied volatility is a crucial element of any financial toolbox, since it is used for quoting and the hedging of options as well as for model calibration. In contrast to the Black-Scholes formula its inverse, the implied volatility, is not explicitly available and numerical approximation is required. We propose a bivariate interpolation of the implied volatility surface based on Chebyshev polynomials. This yields a closed-form approximation of the implied volatility, which is easy to implement and to maintain. We prove a subexponential error decay. This allows us to obtain an accuracy close to machine precision with polynomials of a low degree. We compare the performance of the method in terms of runtime and accuracy to the most common reference methods. In contrast to existing interpolation methods, the proposed method is able to compute the implied volatility for all relevant option data. In this context, numerical experiments confirm a considerable increase in efficiency, especially for large data sets.
\end{abstract}

\textbf{Keywords}
	Black-Scholes implied volatility, real-time evaluation, Chebyshev Polynomials, Polynomial Interpolation, Laplace implied volatility
	

\noindent\textbf{MSC 2010:} 
91G60  
90-08,	
65D05 
\doublespacing

\section{Motivation}
Ever since \cite{BlackScholes1973} and \cite{Merton1973} introduced their option pricing model, the Black-Scholes formula has been omnipresent in the financial industry. 
The one parameter in the model that can not be observed using market data is the volatility of the underlying asset process. The Black-Scholes call price function is strictly monotone increasing in volatility. Hence, for each observed call price there is a unique volatility such that the resulting model price equals the market price. This is called the \emph{implied volatility}, one of the most important quantities in finance.

The implied volatility can be seen as a universal language in the daily business of trading, hedging, model calibration and more generally in risk management.
Typically, trading desks quote option prices in implied volatilities instead of absolute prices. This allows traders to compare option prices on different underlyings such as equities, indices, currencies or commodities. For high frequency trading in particular, very accurate real-time evaluations of the implied volatility are required for large data sets. As stated in \cite{Baumeister2013} and \cite{Celis2017} in practice, often millions of option prices have to be inverted in real-time for instance by large data providers. Furthermore, the implied volatility is needed for the most common derivative hedging strategy, the so-called delta-hedging strategy. It is used to infer the sensitivity of the option price with respect to the underlying spot price, the option's delta. One takes an opposing position to the delta in the underlying asset as a hedge. Since the 1970s a large variety of asset price models that generalize and improve the Black-Scholes model have been introduced. Typically, these models are determined by a number of parameters that are fitted to observed option prices. In the context of this model calibration, the implied volatility enters the objective function. Instead of minimizing (for instance the quadratic) difference of model and market prices, the difference of the corresponding implied volatilities is used. This is a convenient normalization since options from deep in the money to far out of the money are transformed to the same scale. For calibration purposes, the implied volatility needs to be available rapidly---especially in view of routinely processed intraday recalibrations. Depending on the pricing routine employed, the accuracy needs to be medium or high.  Moreover, a closed-form of the implied volatility function is advantageous since it allows the implementation of gradient-based optimization routines. 

Unfortunately, the solution of this inverse problem is not available in an explicit form and thus a numerical approximation method is required. Since the implied volatility function is a crucial element of any financial toolbox, special care is called for. 
The method must allow the computation of implied volatilities for options in all of the different markets. Hence options  with very low or high volatilities as well as options with moneyness varying from far out of the money to deep in the money have to be included. Therefore the method must cover a \textit{large domain of input variables}. In order to satisfy the needs of the different applications the method should be \textit{highly efficient} for a given requirement in terms of accuracy. Even for very large data sets the method must be able to deliver accurate \textit{real-time evaluations} of the implied volatility. In view of the implied volatility as an ingredient of optimization routines, the approximation should be given in \textit{closed-form} with accessible derivatives. Finally, the method should be \textit{easy to implement and to maintain}. There exists a long list of papers dealing with this problem.

The first class of methods to determine the implied volatility are iterative root finders such as
\begin{itemize}
\item Newton-Raphson,
\item \textit{Matlab}s implied volatility function \textit{blsimpv},
\item the iterative methods of \cite{Jaeckel2006} and \cite{Jaeckel2015}.
\end{itemize}

The first approach dates back to \cite{ManasterKoehler1982} who showed that a Newton-Raphson algorithm can be applied to calculate the implied volatility. The \textit{blsimpv} function is part of the financial toolbox in \textit{Matlab} and uses an iterative scheme based on Brent-Dekker. The \textit{blsimpv} function becomes very slow for larger data-sets and the Newton-Raphson algorithm is highly dependent on the starting value of the iteration. For many standard parameters it often converges fast but for more extreme parameters, the number of iterative steps increases significantly, see Section \ref{section_numerik}.

To overcome this problem, \cite{Jaeckel2006} exploits the limit behaviour of the normalised call price to provide a better initial guess, which reduces the iterative steps in a modified Newton method. In \cite{Jaeckel2015} this approach is further improved using rational approximation for the initial guess and Householder's method for the iteration. This reduces the number of iterative steps even further. One drawback of the method is that it comes with the burden of a relatively complex implementation and therefore a costly maintenance. Already the generation of the initial guess relies on the rational cubic interpolation of \cite{DelbourgoGregory1985} and a transformation, which is highly sensitive in terms of the accuracy of the error function and the inverse of the normal distribution.
 
The second class of methods to compute the implied volatility are non-iterative approximations methods. These methods are popular since they provide
\begin{itemize}
\item fast computation of implied volatilities,
\item easy implementation and maintenance,
\item closed-form expressions,
\item a simple interpretation of the formula.
\end{itemize}

First, analytical approximations for at the money and later near the money options have been developed. Typically, these methods depend on a series expansion of the call price at the money. Prominent examples are the approximation formulas of \cite{BrennerSubrahmanyan1988}, \cite{Chance1996}, \cite{CorradoMiller1996}, \cite{ChambersNawalkha2001} and \cite{LorigPagliaraniPascucci2014}. Typically, these methods suffer from a poor performance for out of the money options. More recently, \cite{Li2008}, \cite{PistoriusStolte2012} and \cite{Celis2017} have developed rational approximations of the implied volatility. Unfortunately, the domain for which the latter set up the interpolation is very restrictive and excludes option prices which occur in practice.  In particular, options with relatively high or low volatilities cannot be handled. For example Figure \ref{fig:DAXDomains} illustrates the moneyness and implied time-scaled volatilities of options on the DAX index traded on 6/20/2017 (Source Thomson Reuters Eikon). In this example, only $85\%$ of all put options and $92\%$ of the call options are covered. Although the domain was designed for equity options, even in this case the formula cannot be applied to all relevant contracts.
Moreover, one needs additional iterative Newton steps to achieve a high accuracy close to machine precision for the methods of \cite{Li2008} and \cite{Celis2017}.
\begin{center}
\includegraphics[width=15cm]{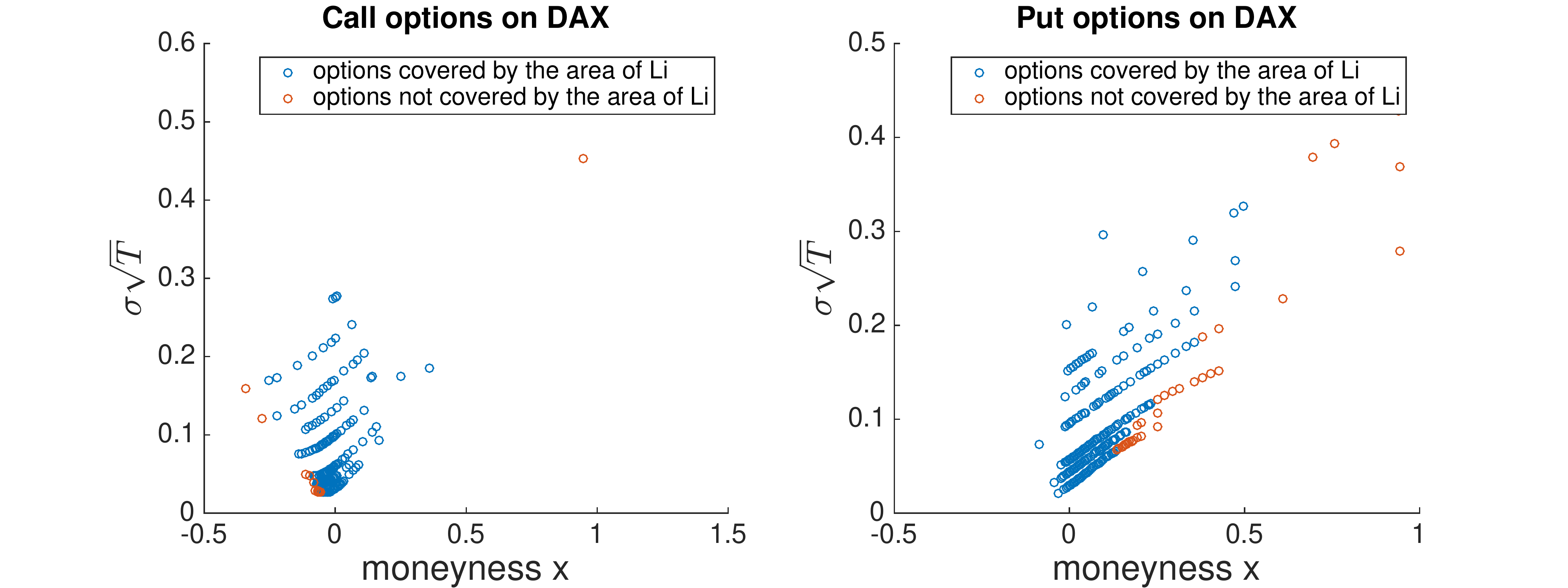}
\captionof{figure}{Moneyness $x$ and time-scaled volatility $\sigma\sqrt{T}$ of DAX options on 6/20/2017. We only considered options with positive trading volume.}\label{fig:DAXDomains}
\end{center}

In this paper, we propose polynomial approximation to the implied volatilities surface choosing Chebyshev interpolation. The approximation of the implied volatility thus inherits the appealing properties of Chebyshev interpolation, namely the fact that the approximation is highly efficient, stable and easy to implement. It is sufficient to invert the normalized call price, which reduces the dimensionality of the approximation to a bivariate Chebyshev interpolation. For this we use the algorithm provided in the \textit{MATLAB} package \textit{chebfun} (www.chebfun.org) that exploits the low-rank structure of the problem. Hence, the method enables a fast computation of implied volatilities at a high accuracy. In order to cover the whole range of relevant options, one has to investigate the shape of the call price surface further. We observe areas where the call price is almost linear as well as areas where the call price is extremely flat in the volatility. For an optimal treatment of the different areas we introduce a domain splitting. In the flat areas, we exploit the limit behaviour by introducing appropriate transformations. We show that the error of the interpolation decays subexponentially fast and we provide an explicit error bound. It is straightforward to adjust the method to any pre-set accuracy to obtain an optimal efficiency. Furthermore, the implied volatility function is represented by a polynomial and hence very easy to handle. Let us emphasize that this procedure is more general and can be applied to similar problems as well. To illustrate this, we approximate the implied volatility in a market model based on a Laplacian density function instead of a normal distribution introduced by \cite{Madan2016}.

The remainder of the article is as follows. In Section \ref{section_preliminaries}, we recall the normalized call price and the Chebyshev function on which our approach relies. In Section \ref{section_simple}, we introduce a simple, bivariate interpolation of the implied volatility based on a low-rank interpolation in Chebyshev nodes. We highlight the potential of the method and show that we reach a maximal error close to machine precision with a low number of interpolation points. 
In Section \ref{section_splitting}, we introduce the bivariate interpolation on a larger domain which includes very low and very high volatilities as well as deep in the money and far out of the money options. In Section \ref{section_numerik}, we show that the method is both, fast and accurate and compare it to the methods of Newton-Raphson, \cite{Li2008} and \cite{Jaeckel2015}. We devote the last section to the approximation of the implied volatility in the Laplacian market model.

\section{Preliminaries}\label{section_preliminaries}
\subsection{The normalized Black-Scholes price}\label{section_norm_bs_price}
As stated, the implied volatility depends on the parameters $S_{0}$,$K$, $T$, $r$ and the option premium $C$. The computational effort to interpolate a function depending on five variables is challenging. Fortunately, we can reduce the dimensionality as stated in \cite{Jaeckel2015} amongst others using the normalized call price given as
\begin{eqnarray}\label{Normalized BS-formula}
c(x,v)&=&e^{\frac{x}{2}}\Phi\left(\frac{x}{v}+\frac{v}{2}\right)-e^{-\frac{x}{2}}\Phi\left(\frac{x}{v}-\frac{v}{2}\right)\text{ with}\\
\nonumber \\
x &=& \log(S_0e^{rT}/K)=rT+\log(S_0/K)\nonumber\\
v&=&\sigma\sqrt{T}\nonumber.
\end{eqnarray}
In this context $x$ measures the \textit{moneyness} (the option is out of the money if $x<0$, at the money if $x\approx0$ and in the money if $x>0$), $v$ corresponds to the \textit{time-scaled volatility}. We have
\begin{align}\label{Normalization}
c(x,v)
=\frac{C(S_0,K,T,r,\sigma)}{\sqrt{S_0 e^{-rT}K}}
\end{align}
Furthermore, call prices of in the money options can be expressed by those of out of the money options, namely
\begin{align}\label{ITM_option_formula}
c(-x,v)=c(x,v)+e^{-\frac{x}{2}}-e^{\frac{x}{2}}.
\end{align}
Hence the domain can be reduced to $x\leq0$ and consequently the call price is normalized to values in $[0,1]$. To calculate the implied volatility $\sigma$ for a call price $C$ it is thus sufficient to solve Equation \eqref{Normalized BS-formula} for $v$ using the normalized call price $c$.

\subsection{Chebyshev Interpolation}
The polynomial interpolation of a function $f$ on $[-1,1]$ in the \textit{Chebyshev points} $x_{k}=\cos(k\pi/N)$ is given by
\begin{align}\label{1D_Cheby_interp}
f(x)\approx I_N(x):=\sum_{j=0}^N a_j T_j(x) \qquad \text{with} \qquad  a_j= \frac{2^{\1{_{0<j<N}}}}{N}\sum_{k=0}^N { }^{\prime\prime} f(x_k)T_{j}(x_k),
\end{align}
where $T_{j}(x)=\cos(j\cos^{-1}2(x))$ and $\sum { }^{\prime\prime} $ indicates that the first and the last summand are halved. If the function has an analytic extension to a Bernstein ellipse $E_\rho$, the error decays exponentially, see Theorem 8.2 of \cite{Trefethen2013}. In practice, this often yields an approximation close to machine precision with a low interpolation order. Together with a stable implementation being available, see \cite{Higham2004}, these are the key advantages of the Chebyshev interpolation that we will exploit.

The univariate Chebyshev interpolation admits a two-dimensional tensor based extension. A function $f:[-1,1]^2\to\IR$ can be approximated by the interpolation
\begin{align}\label{eq:2D_Cheby_tensor}
f(x,y)\approx I^{N_1,N_2}(x,y):=\sum_{i=0}^{N_1-1}\sum_{j=0}^{N_2-1} a_{ij}T_i(x)T_j(y).
\end{align}
with two-dimensional coefficients given by
\begin{align*}
a_{ij}=\frac{2^{\1{_{0<i<N_{1}}}}}{N_{1}}\frac{2^{\1{_{0<j<N_{2}}}}}{N_{2}}\sum_{k_{1}=0}^{N_{1}} { }^{\prime\prime}\sum_{k_{2}=0}^{N_{2}} { }^{\prime\prime} f(x_{k_1},y_{k_2})T_{i}(x_{k_1})T_{j}(y_{k_2}).
\end{align*}

Again, we obtain an subexponential error decay if the function has an analytic extension to a two-dimensional Bernstein ellipse, see \cite{SauterSchwab2010}. The tensor approach of \eqref{eq:2D_Cheby_tensor} suffers from the curse of dimension: To decrease the error in the same proportion as in the univariate case, the number of summands and thus the complexity increases quadratically. Therefore more efficient bivariate Chebyshev interpolations have been developed. In particular, the algorithm of \cite{TownsendTrefethen2013} implemented in \textit{chebfun2} reconciles the opposed aims of high accuracy and high efficiency for bivariate functions. It relies on a Gauss elimination with complete pivoting to find an optimal low rank $k$ approximation. This leads to
\[
f(x,y)\approx f_k(x,y):=\sum_{j=1}^kd_j c_j(y) r_j(x)
\]
where $c_j$ and $r_j$ are one-dimensional Chebyshev interpolations of degree $N_{1}$ and $N_{2}$. This enables a matrix representation of the resulting interpolation.

\section{Introduction of the approximation method}\label{section_simple}

We introduce a direct interpolation of the implied volatility function using Chebyshev nodes. The two-dimensional Chebyshev interpolation requires the function to be defined on the rectangle $[-1,1]\times[-1,1]$. For the implied volatility $v(x,c)$ this is not given a priori. The variable $x$ can easily be restricted to some interval $x\in[x_{min},x_{max}]$ which can be transformed to $[-1,1]$ by a linear transformation $\varphi$,
\begin{align}\label{eq:trans_phi_x}
\varphi:[x_{min},x_{max}]\to[-1,1] \qquad \text{with} \qquad \varphi(x):=1-2\cdot\frac{x_{max}-x}{x_{max}-x_{min}}.
\end{align}
The maximal domain of $c$, on the contrary, does depend on $x$ as for $x<0$ the upper limit is given by $e^{\frac{x}{2}}$.


The intuitive approach  is to choose $\xi \in[\xi_{min},\xi_{max}]$ with $c=\xi e^{\frac{x}{2}}$ for a given moneyness $x\in[x_{min},x_{max}]$ and scale the resulting interval $[\xi_{min}e^{\frac{x}{2}},\xi_{max}e^{\frac{x}{2}}]$ to $[-1,1]$ by a linear transformation. If $\xi_{min}$ is not chosen to close to 0, a two-dimensional Chebyshev interpolation on this domain provides promising results.\\

For a first numerical example, we fix $x_{min}=-5$, $x_{max}=0$, $\xi_{min}=0.05$ and $\xi_{max}=0.8$. Then we choose a $50\times50$ Chebyshev grid $(\tilde{x}_{ij},\tilde{c}_{ij})\in[-1,1]^2$ and transform the points to the domain by setting $x_{ij}:=x_{min}+\frac{1}{2}(\tilde{x}_{ij}+1)(x_{max}-x_{min})$ and $c_{ij}:=\xi_{min}e^{\frac{x_{ij}}{2}}+\frac{1}{2}(\tilde{c}_{ij}+1)(\xi_{max}e^{\frac{x_{ij}}{2}}-\xi_{min}e^{\frac{x_{ij}}{2}})$. On these points we compute the implied volatilities using the method of \cite{Jaeckel2015} and apply the \textit{chebfun2}-algorithm.

To determine the interpolation error we define an equidistant grid of $100$ points in the interval $[x_{min},x_{max}]$. For fixed $x$, the interval bounds in $v$ are defined as $v_{min}(x)=v\left(\xi_{min}e^{\frac{x}{2}},x\right)$ and $v_{max}(x)=v\left(\xi_{max}e^{\frac{x}{2}},x\right)$. For each $x$-value in the fixed equidistant grid, $100$ points distributed equidistantly in $[v_{min}(x),v_{max}(x)]$ are determined. This leads to $100\times100$ points in the $(x,v)$ space as reference points for which we compute normalized call prices $c(x,v)$. For each reference call price we compute the implied volatility using the bivariate Chebyshev method.\\
Figure \ref{fig:IntuitiveApproachSmallArea} shows that this approach performs very well. The maximal error lies below a level of $10^{-7}$ for $N=N_1=N_2=50$ and decreases exponentially fast in $N$.

%

\begin{figure}[htb]
    \centering
    \begin{minipage}{0.45\linewidth}
        \centering
        \includegraphics[width=7cm]{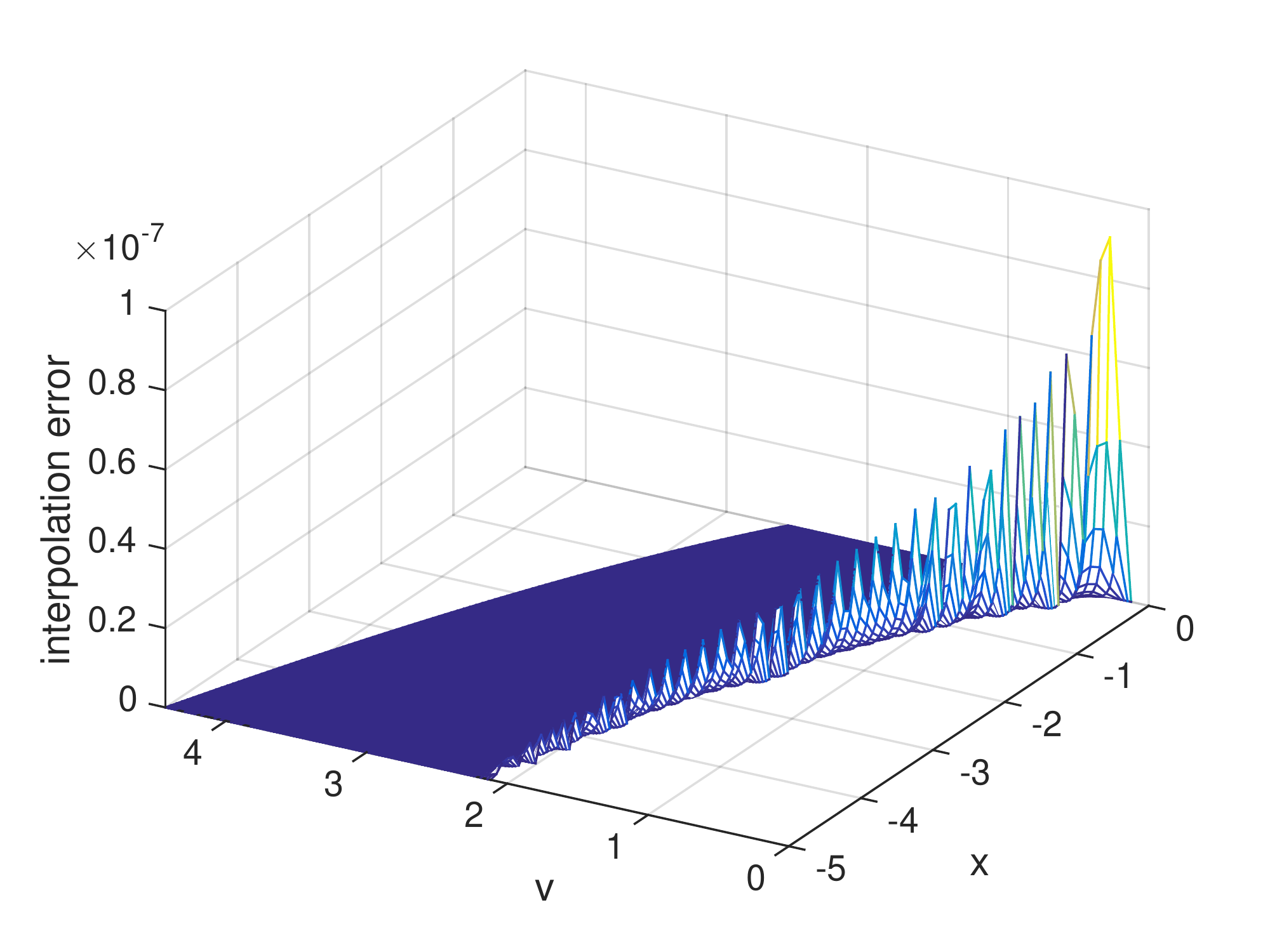}
    \end{minipage}
    \begin{minipage}{0.45\linewidth}
        \centering
        \includegraphics[width=7cm]{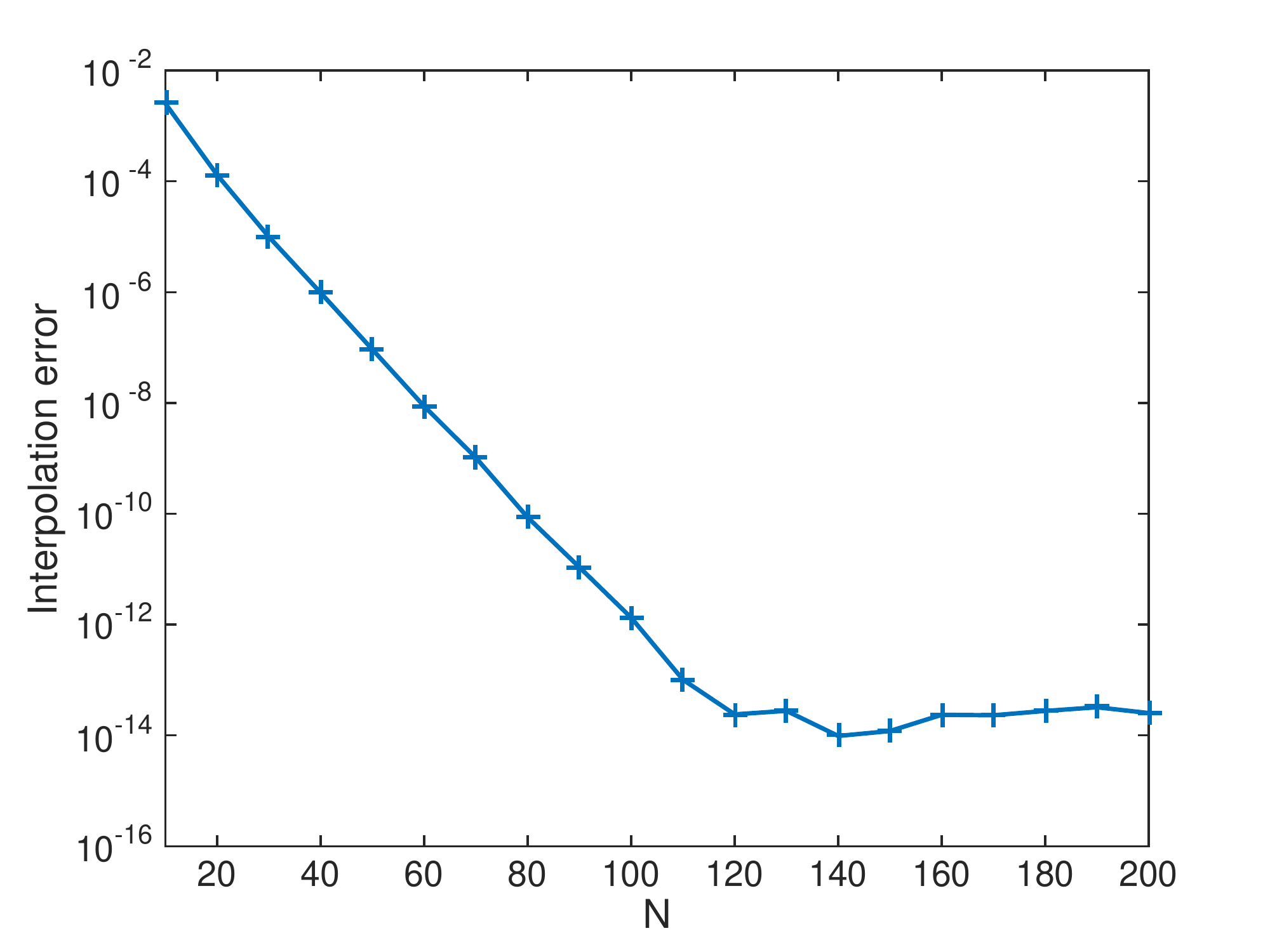}
    \end{minipage}
\captionof{figure}{Interpolation error (left) and exponential error decay (right) using linear transformations for $x\in[x_{min},x_{max}]$, $c\in[\xi_{min}e^{\frac{x}{2}},\xi_{max}e^{\frac{x}{2}}]$ with $x_{min}=-5$, $x_{max}=0$, $\xi_{min}=0.05$ and $\xi_{max}=0.8$.}\label{fig:IntuitiveApproachSmallArea}%
\end{figure}

As in the approximation methods mentioned above, we have pre-fixed a domain that is convenient for the approach. Naturally, the question arises as to which domain is appropriate to cover the relevant option data. 

\subsection{Investigation of the interpolation domain by market data}\label{section_data_domain}
To find an appropriate interpolation domain, we investigate option data of the DAX, the EURO STOXX 50, the S\&P 500 and the VIX index from \textit{Thomson Reuters Eikon}. For all options with non-zero trading volume we compute the forward moneyness $x$ and the time-scaled volatility $\sigma\sqrt{T}$. Then we check if the resulting parameters are covered by the domain of Li. Figure \ref{fig:Domains_Data} illustrates the option parameters for all four indices. For all indices we observe that a relevant part of the options is not covered by the domain of Li. We observe moneyness between $-1.5$ and $2$ as well as time-scaled volatilities up to $1$. In different markets or under different market conditions one can expect to observe even more extreme option parameters. Volatilities become considerably higher during a financial crisis. This motivates us to set up a Chebyshev interpolation of the implied volatility on a significantly larger domain which covers all relevant option data. To do this in the most efficient way we need to enhance the intuitive approach introduced above with a splitting of the domain and tailored scaling functions. 

\begin{figure}[h]
\centering
    \begin{minipage}{0.49\linewidth}
        \centering
        \includegraphics[width=8.8cm]{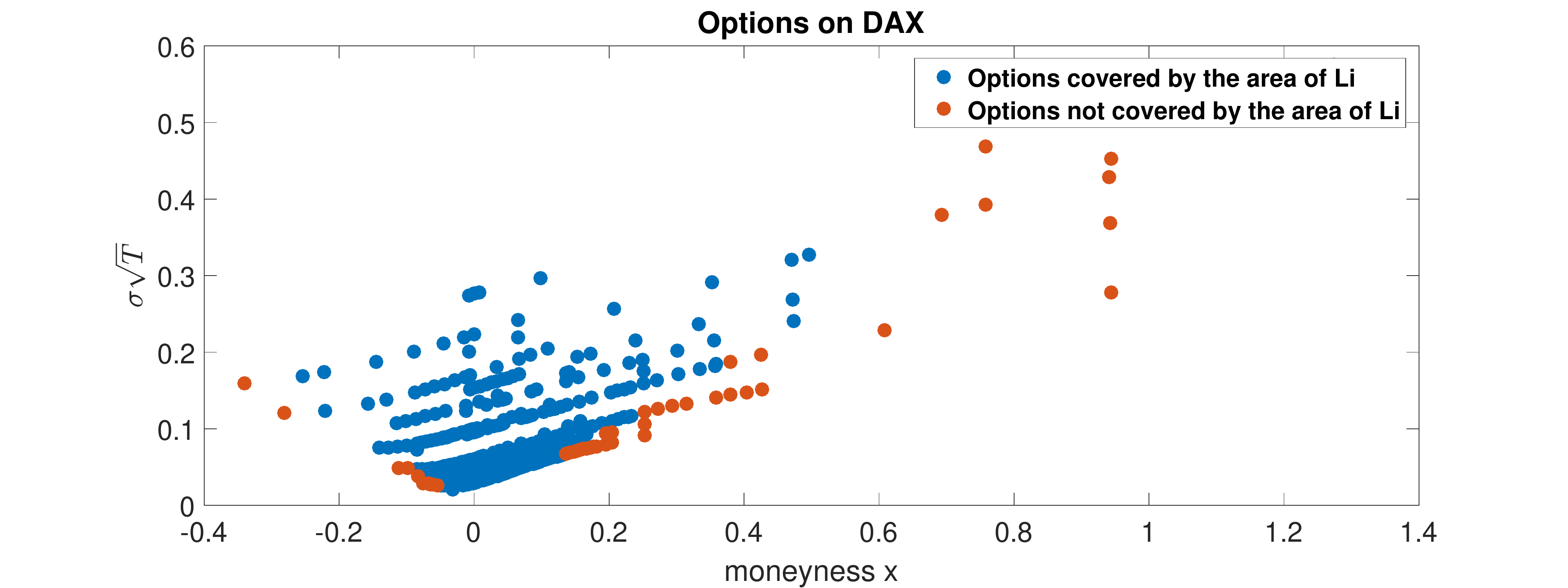}
    \end{minipage}
    \hfill
    \begin{minipage}{0.49\linewidth}
        \centering
        \includegraphics[width=8.8cm]{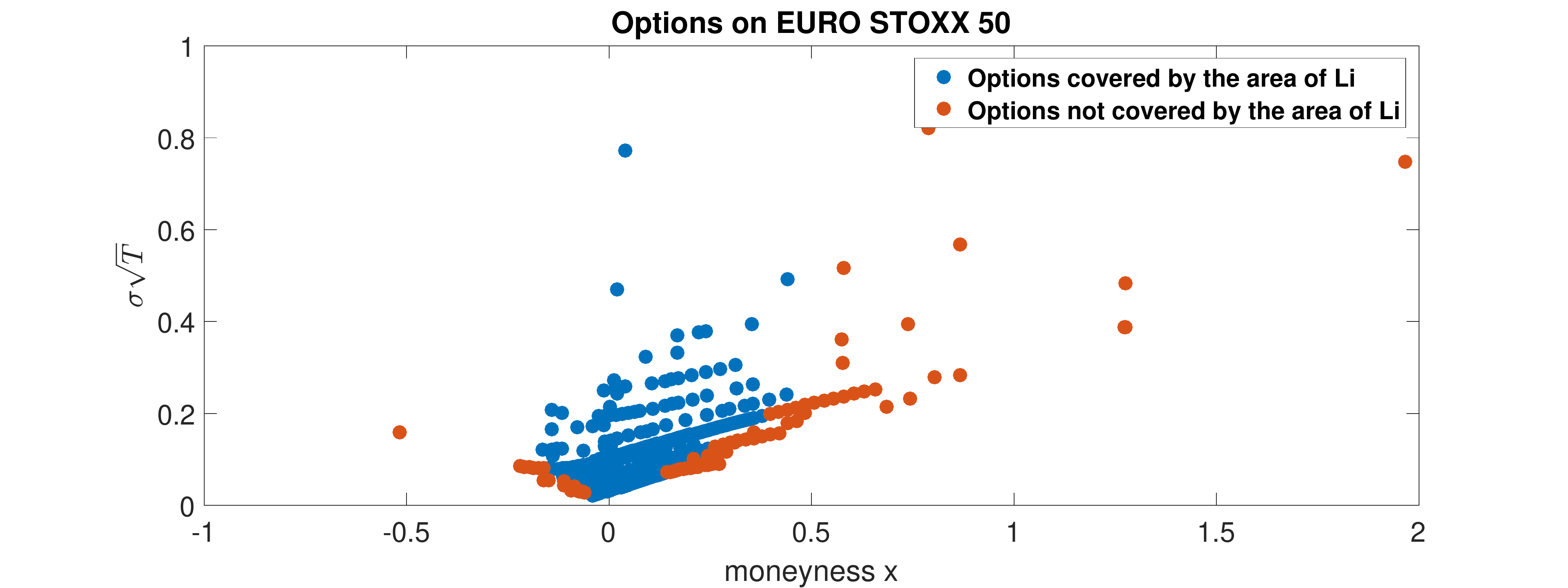}
    \end{minipage}
    
    \begin{minipage}{0.49\linewidth}
        \centering
        \includegraphics[width=8.8cm]{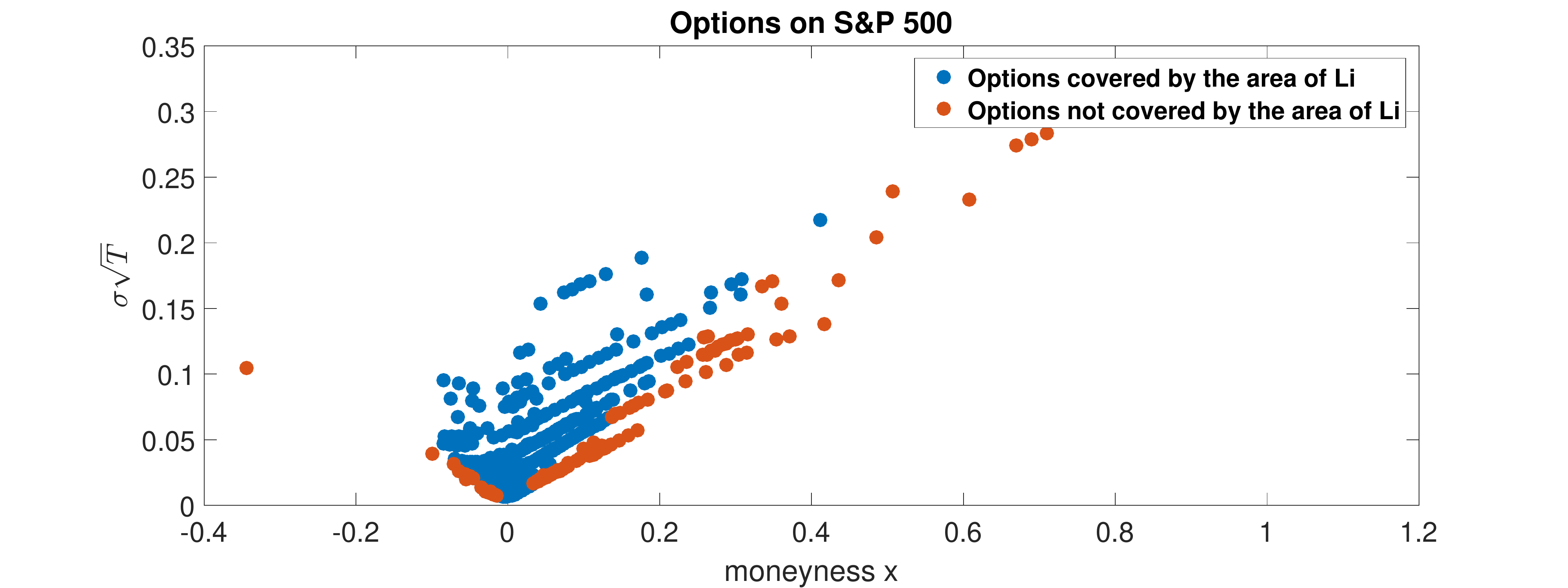}
    \end{minipage}
    \hfill
    \begin{minipage}{0.49\linewidth}
        \centering
        \includegraphics[width=8.8cm]{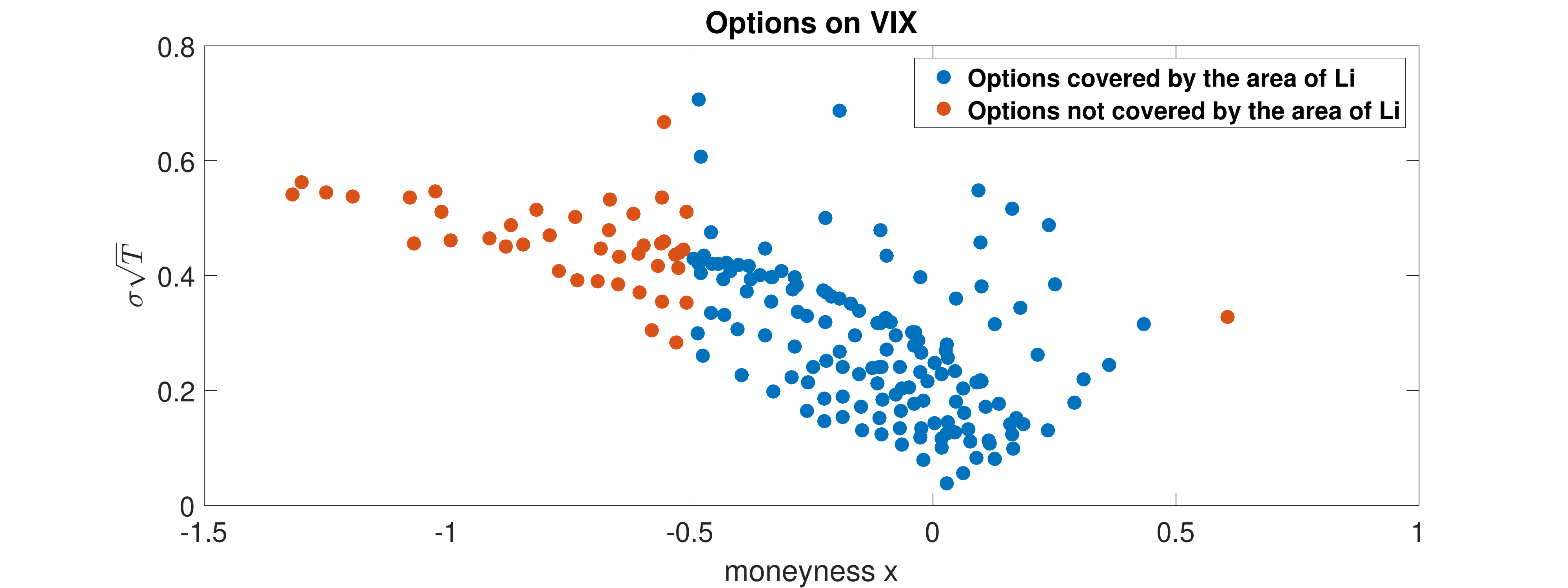}
    \end{minipage}
\captionof{figure}{Moneyness $x$ and time-scaled volatility $\sigma\sqrt{T}$ of options on four different indices. We only considered options with positive trading volume.}\label{fig:Domains_Data}%
\end{figure}

\section{Domain splitting and scaling}\label{section_splitting}
To derive an approximation of the implied volatility on a sufficiently large domain, we further inspect the normalized call price. The implied volatility is not analytic at $c(x)=0$ and $c(x)=e^{\frac{x}{2}}$. Therefore the maximal possible interval needs to be restricted to $0<v_{min}(x)<v_{max}(x)<\infty$ with call prices $0<c_{min}<c_{max}<e^{\frac{x}{2}}$, which excludes these points. This assumption is not restrictive if the chosen $v_{min}$ is small enough. Extending the domain towards the maximal interval decreases the rate of convergence. To reduce this impact, we exploit the limit behaviour of the call price. Graph \ref{fig:CallPriceSplit} shows for a fixed moneyness $x$ the normalized call price as a function of the volatility. We observe that the call price is flat for very low as well as very high volatilities and almost linear around the point of inflection. This motivates us to split the domain into three parts. \\

\begin{center}
\includegraphics[width=15cm]{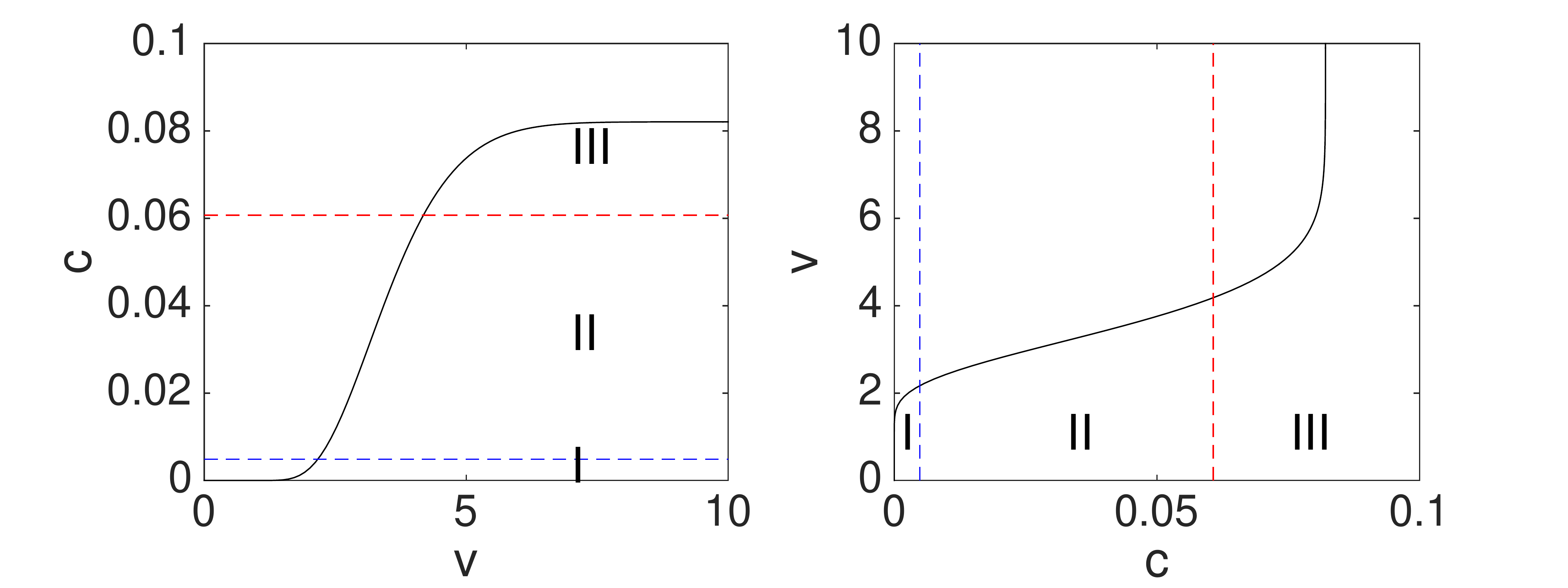}
\captionof{figure}{Splitting of the normalized call price ($c$) depending on the time-scaled volatility ($v$) and its inversion for $x=-5$ into three parts.}\label{fig:CallPriceSplit}
\end{center}

\begin{align}
D_1:=[c_{min}(x), c_1(x)],\quad D_2:=[c_1(x), c_2(x)],\quad D_3:=[c_2(x), c_{max}(x)]
\end{align}
with corresponding volatilities $0<v_{min}(x)<v_1(x)<v_2(x)<v_{max}(x)<1$. The idea of splitting the domain is based on the method of \cite{Jaeckel2015}.

For each domain we will tailor a bivariate Chebyshev interpolation. Where call prices are flat its inverse becomes very steep. Hence, a direct polynomial interpolation is not well-suited. Fortunately, by exploiting the asymptotic behaviour of the call price function, we resolve the problem. On each interval, we define a scaling function $\phi_{i,x}:D_i \to[-1,1]$ for $i\in\{1,2,3\}$ which transforms the call price to $[-1,1]$ for each $x\in[x_{min},x_{max}]$. For the resulting functions $\tilde{v}:[-1,1]^2\to \IR$, $(\tilde{c},\tilde{x})\mapsto v(c,x)$ with $x=\varphi^{-1}(\tilde{x})$ and $c=\phi_{i,x}^{-1}(\tilde{c})$ for $i\in\{1,2,3\}$ where $\varphi$ is the linear scaling of \eqref{eq:trans_phi_x}. For a given call price $c$ and moneyness $x\leq0$ the implied volatility can then be approximated by 
\[
v(c,x)\approx I_i^{N^i_1,N^i_2}(\phi_{i,x}(c),\varphi(x))\text{ where $i$ satisfies }c\in D_i.
\]

\subsection{Scaling functions}
In the following, we introduce the appropriate scaling functions for each of the areas.

\subsubsection{Medium volatilities}
First consider the middle part of the function. As discussed, for $v$ around the point of inflection, the implied volatility surface is almost linear. Thus, a linear scaling suffices,
\begin{align*}
\phi_{2,x}:[c_{1}(x),c_{2}(x)]\to[-1,1],\quad 
 c\mapsto 2\frac{c-c_{1}(x)}{c_{2}(x)-c_{1}(x)}-1.
\end{align*}
Clearly, $\phi_2$ is analytic and the inverse is given by
\begin{align*}
\phi_{2,x}^{-1}[-1,1]\to[c_{1}(x),c_{2}(x)],\quad 
 \tilde{c}\mapsto c_{1}(x)+\frac{1}{2}(\tilde{c}+1)(c_{2}(x)-c_{1}(x)).
\end{align*}

\subsubsection{Low volatilities}
For low volatilities the call price function is very flat, and thus the implied volatility function as its inverse is steep. Therefore, a linear scaling will not provide an appropriate transformation prior to a polynomial interpolation. Instead, we propose a suitable scaling function that reduces the steepness of the inverse such that it becomes almost linear. This will increase the efficiency of the resulting approximation considerably, when compared to a linear scaling. To do this, we explore the limit behaviour of the normalized call price. For $v\to0$ we have by equation (2.8) of \cite{Jaeckel2006} that
\begin{align*}
c(x,v)\approx \varphi\left(\frac{x}{v}\right) \left(\frac{v^3}{x^2}\right),
\end{align*}
where $\varphi$ is the density of the standard normal distribution. By inverting the function $c(v)=\varphi\left(\frac{x}{v}\right)$, which has the major effect in the limit, we obtain an inverse of the form $v=c_{2}(-(c_{1} + 2*log(c)/x^2))^{-1/2}$ with constants $c_{1},c_{2}$ that are not relevant for us. This leads to the following transformation
\begin{align*}
\tilde{\phi}_{1,x}:[0,c_{1}(x)]&\rightarrow[-1,1]\\
 c&\mapsto\begin{cases}2\left(-\frac{2}{(x-\delta)^2}\log(c)+\frac{2}{(x-\delta)^2}\log(c_{1}(x))+1\right)^{-\frac{1}{2}}-1 &\text{ if }c>0\\
 -1 &\text{ else.}
 \end{cases}
\end{align*}
The parameter $\delta>0$ ensures the well-definedness for $x=0$ and the remaining terms are needed to map the interval $[0,c_{1}(x)]$ to $[-1,1]$. The transformation $\tilde{\phi}_{1,x}$ is analytic with inverse
\begin{eqnarray*}
\tilde{\phi}_{1,x}^{-1}:[-1,1]&\to[0,c_{1}(x)]:\ 
 \tilde{c}&\mapsto \begin{cases} c_{1}(x)e^{-\frac{2(x-\delta)^2}{(\tilde{c}+1)^2}+\frac{(x-\delta)^2}{2}} &\text{ if } \tilde{c}>-1\\
 0 &\text{ else.}
 \end{cases}
\end{eqnarray*}
Using this transformation the function $v(\tilde{\phi}_{1,x}^{-1}(\tilde{c}),x)$ is approximately linear in $\tilde{c}$.  

As already mentioned, to guarantee analyticity we restrict the interval to $[c_{min}(x),c_{1}(x)]$ for $0<c_{min}(x)<c_1(x)<e^{\frac{x}{2}}$. Therefore, we define the scaling function for the low volatilities $\phi_{1,x}:[c_{min}(x),c_{1}(x)]\to[-1,1]$ as $\phi_{1,x}(c):= l(\tilde{\phi}_{1,x}(c))$, where $l$ is the linear transformation
\begin{eqnarray*}
	l:[\tilde{\phi}_{1,x}(c_{min}(x)),1]\to[-1,1]:
	c\mapsto 2\cdot\frac{c-\tilde{\phi}_{1,x}(c_{min}(x))}{1-\tilde{\phi}_{1,x}(c_{min}(x))}-1.
\end{eqnarray*}
The function $c\mapsto\phi_{1,x}(c)$ is analytic in the interval $[c_{min}(x),c_{1}(x)]$ as it is a composition of two analytic functions. The inverse of $\phi_{1,x}$ is given by $\phi_{1,x}^{-1}(\tilde{c})=\tilde{\phi}_{1,x}^{-1}\left(l^{-1}\left(\tilde{c}\right)\right)$. 

\subsubsection{High volatilities}
Just as for the low volatilities, the call price function is very flat for high volatilities and thus its inverse becomes steep. As $\lim_{c\to e^{\frac{x}{2}}}v(c,x)=\infty$ the implied volatility function is not even bounded. As a first step, the volatility is capped by some $v_{max}$ to guarantee that the slope will not be arbitrarily high. Again, a linear transformation is not the best choice and we propose a different scaling based on the behaviour of the call price. From \cite{Jaeckel2006} equation (2.7) we obtain for $v\to\infty$ 
\[
c(x,v)\approx e^{\frac{x}{2}}-\frac{4}{v}\varphi\left(\frac{v}{2}\right).
\]
A similar transformation as in the case of low volatilities entails improvement. Assume first that $c_{max}(x)=e^{\frac{x}{2}}$ and define
\begin{eqnarray*}
\tilde{\phi}_{3,x}:[c_2(x),e^{\frac{x}{2}}]\to[0,\infty]:
 c\mapsto \begin{cases} \left(-8\log\left(\frac{e^{\frac{x}{2}}-c}{e^{\frac{x}{2}}-c_2(x)}\right)\right)^{\frac{1}{2}} & \text{ if }c<e^{\frac{x}{2}}\\
 \infty & \text{ else}
 \end{cases}
\end{eqnarray*}
with inverse
\begin{eqnarray*}
\tilde{\phi}_{3,x}^{-1}:[0,\infty]\to[c_2(x),e^{\frac{x}{2}}]:
 \tilde{c}\mapsto \begin{cases} e^{\frac{x}{2}}-\left(e^{x/2}-c_2(x)\right)e^{-\frac{\tilde{c}^2}{8}} & \text{ if }\tilde{c}<\infty \\
 e^{\frac{x}{2}} & \text{ else.} \end{cases}
\end{eqnarray*}
Exploiting the limit behaviour of the call price, one can show that for $v$ large enough $\tilde{c}=\tilde{\phi}_{3,x}(c(x,v))\approx -v$. Hence $v=v(\tilde{\phi}_{3,x}^{-1}(\tilde{c}),x)\approx -\tilde{c}$ which is linear in $\tilde{c}$.

Now for $c_{max}(x)<e^{\frac{x}{2}}$ the transformation is a bijection into a bounded domain which can be normalized to $[-1,1]$  by the linear transform as in the previous case
\begin{eqnarray*}
	l:[0,\tilde{\phi}_{3,x}(c_{max}(x))]\to[-1,1]:
	c\mapsto\frac{2c}{\tilde{\phi}_{3,x}(c_{max}(x))}-1.
\end{eqnarray*}
Thus $\phi_{3,x}(c):= l(\tilde{\phi}_{3,x}(x))$ and $\phi_{3,x}^{-1}(\tilde{c})=\tilde{\phi}_{3,x}^{-1}\left(l^{-1}\left(\tilde{c}\right)\right)$ depending on the choice of $c_{max}$.


\subsection{Splitting} \label{ch:Splitting}
The explicit choice of the boundaries depends on the particular application. In the following we want to set the boundaries in such a way that a very large set of parameters is covered and the rate of convergence is about the same for all areas.\\ 

\textbf{Maximal volatility $v_{max}$:}\\
We choose as an upper bound for the time scaled volatility $v_{max}=6$. This allows us to include highly volatile markets and long maturities. At the same time the method can achieve accuracies close to machine precision.\\

\textbf{Minimal volatility $v_{min}$:}\\
We define a lower bound by
\[
v_{min}(x)=0.001-0.03x.
\]
For this choice the corresponding prices $c_{min}(x)$ can be computed with the standard machine precision. It includes very low volatilities. For instance at $x=\log(e^{rT}S_{0}/K)=0$ this choice even allows call options with a time to maturity of one day ($T=1/365$) and a Black-Scholes volatility of $\sigma\approx2\%$. The rate of convergence can be increased further if $v_{min}$ is chosen higher.\\
%

\textbf{Splitting volatilities $v_1$ and $v_2$:}\\
We choose $v_1$ and $v_2$ according to the properties of the call price function. The call price function has a unique inflection point for $v_c(x)=\sqrt{2\vert x\vert}$ where the slope is maximal. \cite{Jaeckel2015} proposes the lower bound $v_1$ as the zeros of the tangent line at this point. The upper bound $v_2$ is set to be the point where the line hits the maximal call price depending on $x$. See Figure \ref{fig:SplittingTangentLine}. The tangent line is given as
\[
f(v)=\frac{\partial}{\partial v}c(x,v_c)\left( v-v_c(x)\right) +c(x,v_c).
\]
Thus 
\begin{align*}
	\tilde{v}_{1}(x)=v_c(x)-\frac{c(x,v_c)}{\frac{\partial}{\partial v}c(x,v_c)} \qquad
	\tilde{v}_{2}(x)=v_c(x)+\frac{e^{\frac{x}{2}}-c(x,v_c)}{\frac{\partial}{\partial v}c(x,v_c)}
\end{align*}

\begin{figure}[htb]
    \centering
    \includegraphics[width=12cm]{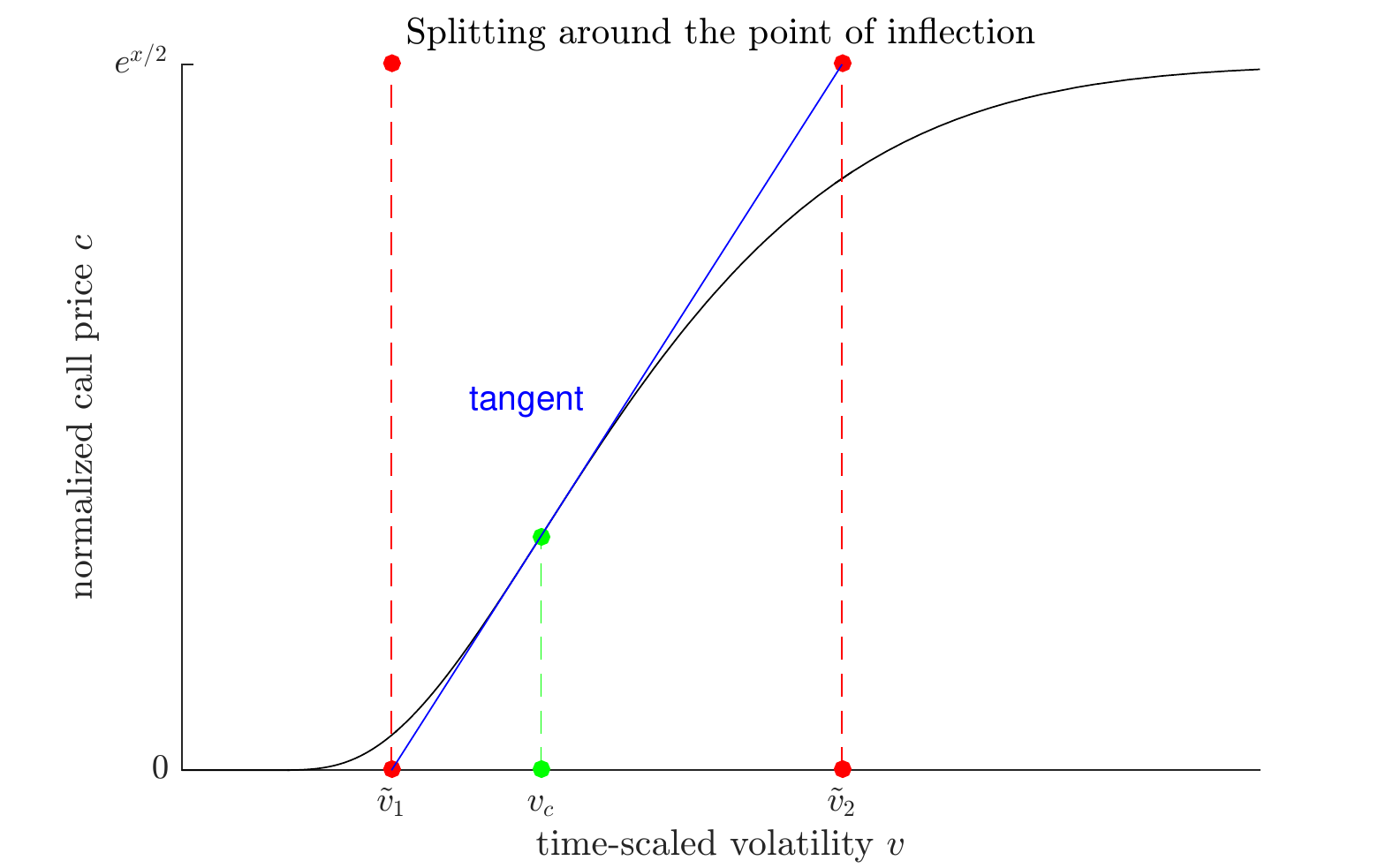}
\captionof{figure}{Definition of the splitting at $\tilde{v}_{1}$ and $\tilde{v}_{2}$ by the zeros of the tangent line at the point of inflection $v_c$.}\label{fig:SplittingTangentLine}%
\end{figure}

However, this choice of boundaries has two serious disadvantages. First, the boundary $\tilde{v}_{1}$ tends to zero, hence for small values of $x$ we obtain $\tilde{v}_{1}(x)<v_{min}(x)$. Second, the computation of $\tilde{v}_{1}(x)$ and $\tilde{v}_{2}(x)$ requires the evaluation of $c(x,v_c)$ and $\frac{\partial}{\partial v}c(x,v_c)$ for each $x$. For real-time computation on large data sets, this becomes a computational burden. We solve this problem by replacing $\tilde{v}_{1}$ and $\tilde{v}_{2}$ with linear approximations. We propose the boundaries
\begin{align*}
v_1(x)=0.25-0.4x \qquad \text{and} \qquad v_{2}(x)=2-0.4x.
\end{align*}

\textbf{Splitting of the low volatility area}\\
For low volatilities we improve the interpolation by introducing a further splitting in $x$. The behaviour of the function changes at the point of inflection. As shown before, we need to set $v_1(0)>v_{min}(0)>0$. Thus at some point the interpolation boundary $v_1$ will cross this change of behaviour. This can be anticipated by a splitting at the point $x$ where $v_{c}(x)=v_1(x)$. For the proposed linear splitting these points are given by $-11.2152$ and $-0.0348$. The first point is outside the domain $[-5,0]$ for $x$, hence we only consider the second point. We divide the area of the low volatilities in an Area I for $x\in[-5,-0.0348]$ and an Area I' for $x\in[-0.0348,0]$, see Figure \ref{fig:PlotAreas}. The empirical results show that this additional splitting further improves the rate of convergence.


\section{Error analysis} 
The following theorem is the theoretical foundation of the high efficiency of the approximation method. Thanks to the analyticity of the Black-Scholes call price and the scaling functions, we gather that the convergence is sub-exponential in the number of nodal points.
\begin{theorem}\label{th:volasErrorDecay}
Let $\phi_i^{-1}(\tilde{c},\tilde{x})$ be analytically continuable to some open region around $[-1,1]^2$ and let $0<\phi_i^{-1}([-1,1],x)<e^{\frac{x}{2}}$ for each $x\in[-1,1]$. Then there exist constants $\rho_1,\rho_2>1, V>0$ such that for $\tilde{v}(\tilde{c},\tilde{x}):=v(\phi_i^{-1}(\tilde{c},\tilde{x}),\phi_x^{-1}(\tilde{x}))$ and its bivariate Chebshev interpolation $I_i^{N^i_1,N^i_2}(\tilde{c},\tilde{x}):=\sum_{j=0}^{N^i_1-1}\sum_{k=0}^{N^i_2-1} a_{jk}T_j(\tilde{c})T_k(\tilde{x})$
\[
\max_{(\tilde{c},\tilde{x})\in[-1,1]^2}|\tilde{v}(\tilde{c},\tilde{x})-I_i^{N^i_1,N^i_2}(\tilde{c},\tilde{x})|\leq4V\left( \frac{\rho_1^{-2(N_1-1)}+\rho_2^{-2(N_2-1)}}{(1-\rho_1^{-2})(1-\rho_2^{-2})}\right)^{\frac{1}{2}}.
\]

\end{theorem}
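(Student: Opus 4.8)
The plan is to deduce the estimate from the standard error bound for bivariate Chebyshev interpolation of a function that extends analytically to a product of Bernstein ellipses --- the bound recalled after \eqref{eq:2D_Cheby_tensor}, in the form given by \cite{SauterSchwab2010} --- so that the real work is to exhibit $\tilde v$ as such a function, together with explicit parameters $\rho_1,\rho_2,V$.

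First I would establish that the implied volatility map $v(c,x)$ is analytic on the relevant complex region. The normalized call price $c(x,v)$ of \eqref{Normalized BS-formula} is built from the entire exponential and the entire extension of $\Phi$ to $\C$, with the only non-entire ingredient the term $x/v$; hence $(x,v)\mapsto c(x,v)$ is jointly analytic wherever $v\neq 0$. On the real set $\{x\le 0,\ v>0\}$ one has $\partial_v c(x,v)>0$ (strict monotonicity of the Black--Scholes price in the volatility), so $\partial_v c$ stays nonzero on a complex neighbourhood of that set. Applying the analytic implicit function theorem at the unique real root $v=v(c,x)$ of $c(x,v)=c$ for each $c\in(0,e^{x/2})$ then yields a local analytic inverse, and since the real inverse $v(\cdot,\cdot)$ is globally single valued on $\{x\le 0,\ 0<c<e^{x/2}\}$, these local branches patch into a function analytic on an open subset of $\C^2$ containing that set. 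The values $c=0$ and $c=e^{x/2}$ are genuine singularities of the inverse, which is precisely why the hypothesis imposes $0<\phi_i^{-1}([-1,1],x)<e^{x/2}$.

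Next I would compose. By assumption $\phi_i^{-1}$ extends analytically to an open neighbourhood of $[-1,1]^2$, and the linear rescaling $\varphi^{-1}$ from \eqref{eq:trans_phi_x} is entire; because $\phi_i^{-1}([-1,1],x)$ lies strictly between $0$ and $e^{x/2}$ for each $x$, a compactness argument on $[-1,1]^2$ shows that a sufficiently small complex neighbourhood of $[-1,1]^2$ is mapped by $(\tilde c,\tilde x)\mapsto(\phi_i^{-1}(\tilde c,\tilde x),\varphi^{-1}(\tilde x))$ into the analyticity domain of $v$ found above. Hence $\tilde v(\tilde c,\tilde x)=v(\phi_i^{-1}(\tilde c,\tilde x),\varphi^{-1}(\tilde x))$ is analytic on a neighbourhood of $[-1,1]^2$, and any such neighbourhood contains a product of Bernstein ellipses $E_{\rho_1}\times E_{\rho_2}$ with $\rho_1,\rho_2>1$; I then set $V:=\sup_{E_{\rho_1}\times E_{\rho_2}}|\tilde v|<\infty$. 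It remains to invoke the interpolation estimate: the bivariate Chebyshev coefficients satisfy $|a_{jk}|\le 4V\rho_1^{-j}\rho_2^{-k}$ (write $a_{jk}$ as a double contour integral over $\partial E_{\rho_1}\times\partial E_{\rho_2}$ and estimate), and bounding $\|\tilde v-I_i^{N^i_1,N^i_2}\|_\infty$ by the discarded coefficients, splitting the index tail into the part with $j\ge N_1$ and the part with $k\ge N_2$, and a Cauchy--Schwarz step on the resulting geometric double sums produces exactly the claimed right-hand side with denominator $(1-\rho_1^{-2})(1-\rho_2^{-2})$.

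The main obstacle is the first step. One has to verify carefully that mere positivity of $\partial_v c$ on the real axis really delivers a single-valued analytic inverse on a full complex neighbourhood of $\{x\le 0,\ 0<c<e^{x/2}\}$ --- that is, controlling where the complexified $\partial_v c$ can vanish and ruling out branch points creeping in from the excluded boundaries $c\in\{0,e^{x/2}\}$ --- so that $\tilde v$ is genuinely analytic all the way out to the ellipses $E_{\rho_1}\times E_{\rho_2}$. Once that analyticity and the bound $V$ are secured, the remainder is the routine Chebyshev argument.
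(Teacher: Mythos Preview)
Your proposal is correct and follows essentially the same route as the paper: establish analyticity of $\tilde v$ on a complex neighbourhood of $[-1,1]^2$ via the analytic implicit function theorem applied to $c(x,v)-c=0$ (using $\partial_v c\neq 0$), then invoke the Sauter--Schwab bivariate Chebyshev bound on the resulting product of Bernstein ellipses. If anything, your treatment of joint analyticity (working directly in two complex variables and patching the local inverses by compactness) is a bit cleaner than the paper's, which argues separately in $\tilde c$ and $\tilde x$ before concluding holomorphy on a product region.
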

\begin{proof}
According to Lemma 7.3.3 of \cite{SauterSchwab2010} we need to show that $\tilde{v}(\tilde{c},\tilde{x}):=v(\phi_i^{-1}(\tilde{c}),\phi_x^{-1}(\tilde{x}))$ is analytically continuable and bounded on $E_{\rho_1}\times E_{\rho_2}$ where $E_{\rho_1}$ and $E_{\rho_2}$ are Bernstein ellipses. \cite{GassGlauMahlstedtMair2015} show that the call price is analytic. For fixed $\tilde{x}$ the implied volatility function $\tilde{v}$ is holomorphic in $\tilde{c}\in[-1,1]$ since the inverse of a bijective holomorphic function is again holomorphic. Next we need to prove analyticity in $\tilde{x}\in[-1,1]$. Let $\tilde{c}\in[-1,1]$. Define $F(x,v):=c(x,v)-\phi_i(\tilde{c},x)$. Then the function $v(\phi_i(\tilde{c},x),x)$ is implicitly given by the solution of $F(x,v)=0$. Furthermore, for each $x\in[x_{min},x_{max}]$, $F$ is holomorphic in some open region with
\[
\left\vert\frac{\partial}{\partial v} F(x,v)\right\vert=\left\vert\frac{\partial}{\partial v} c(x,v)\right\vert=\left\vert\frac{1}{\sqrt{2\pi}}e^{-\frac{x^2}{2v^2}-\frac{v^2}{8}}\right\vert>0
\]
as $v>0$. Thus by the complex implicit function theorem (see Theorem 7.6 of \cite{FritzscheGrauert2012}) there exits a unique function $v(\phi_i(\tilde{c},x),x)$ that is holomorphic in some region around $x$. Thus $\tilde{v}$ is holomorphic in  $G_1\times G_2$ where $G_1$ and $G_2$ are open regions of $[-1,1]$. Thus there exist $\rho_1,\rho_2>1$ such that $E_{\rho_1}\subset G_1$ and $E_{\rho_2}\subset G_2$. The boundedness follows for sufficiently small $\rho_1$, $\rho_2$ as $\tilde{v}$ is continuous on $[-1,1]^2$. 
\end{proof}
%
%
%
We can enhance the efficiency even further by exploiting the low-rank structure of the bivariate functions. To do so, in our implementation we use the \textit{chebfun2} algorithm based on \cite{TownsendTrefethen2013}.

\section{Implementation}\label{ch:Implementation}
As a starting point for the approximation of the implied volatility function, we split the interpolation domain into four different areas. For each area, we approximate the implied volatility by a separate bivariate Chebyshev interpolation of the form $v\approx I_i^{N^i_1,N^i_2}(\phi_{i,x}(c),\phi_x(x))$ where $\phi_x$ is defined as in \eqref{eq:trans_phi_x} and for each area we have a different scaling $\phi_{i,x}$ in $c$. For the sake of a lucid presentation, we list the different areas and transformations below.\\

\textbf{Area I:} For $x\in[-5,-0.0348]$ and $c\in[c_{min}(x),c_1(x)]$ we have
\begin{align*}
\phi_{1,x}(c):=2\cdot\frac{\tilde{\phi}_1(c)-\tilde{\phi}_1(c_{min}(x))}{1-\tilde{\phi}_1(c_{min}(x))}-1.
\end{align*}

\textbf{Area I':} For $x\in[-0.0348,0]$ and $c\in[c_{min}(x),c_1(x)]$ we again use transformation $\phi_{1,x}(c)$.\\

\textbf{Area II:} For $x\in[-5,0]$ and $c\in[c_1(x),c_2(x)]$ we have
\begin{align*}
\phi_{2,x}(c):=2\frac{c-c_{1}(x)}{c_2(x)-c_1(x)}-1.
\end{align*}

\textbf{Area III:} For $x\in[-5,0]$ and $c\in[c_1(x),c_{max}(x)]$ we have
\begin{align*}
\phi_{3,x}(c):=\frac{2\tilde{\phi_3}(c)}{\tilde{\phi}_3(c_{max}(x))}-1.
\end{align*}
The call prices $c_{min}(x),c_{1}(x),c_{2}(x)$ and $c_{max}(x)$ correspond to the volatilities
\begin{align*}
v_{min}(x)=0.001-0.03x, \quad v_{1}(x)=0.25-0.4x, \quad v_{2}(x)=2-0.4x, \quad v_{max}(x)=6.
\end{align*}
\begin{figure}[htb]
    \centering
    \begin{minipage}{0.45\linewidth}
        \centering
        \includegraphics[width=8.62cm]{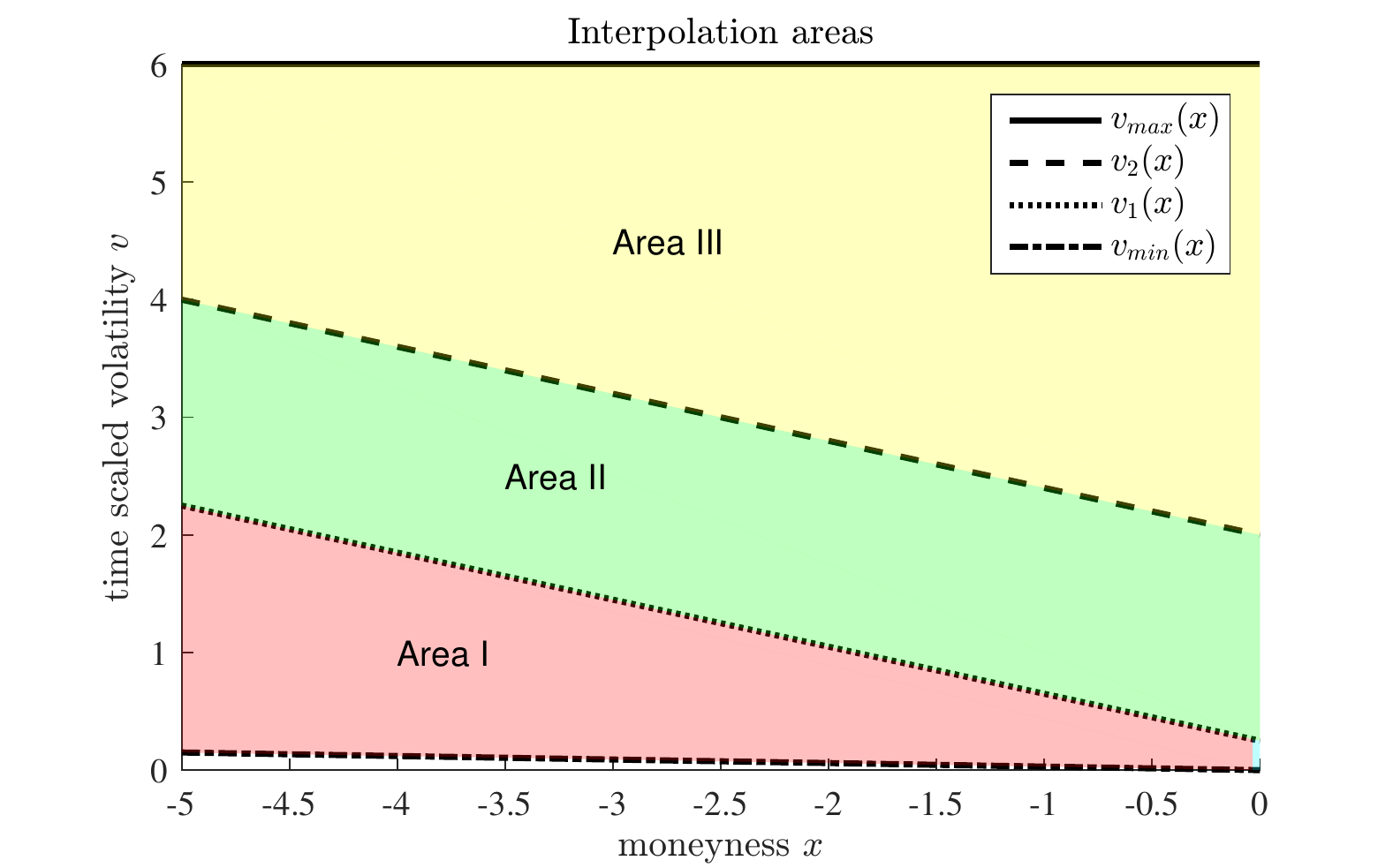}
    \end{minipage}
    \begin{minipage}{0.45\linewidth}
        \centering
        \includegraphics[width=5.38cm]{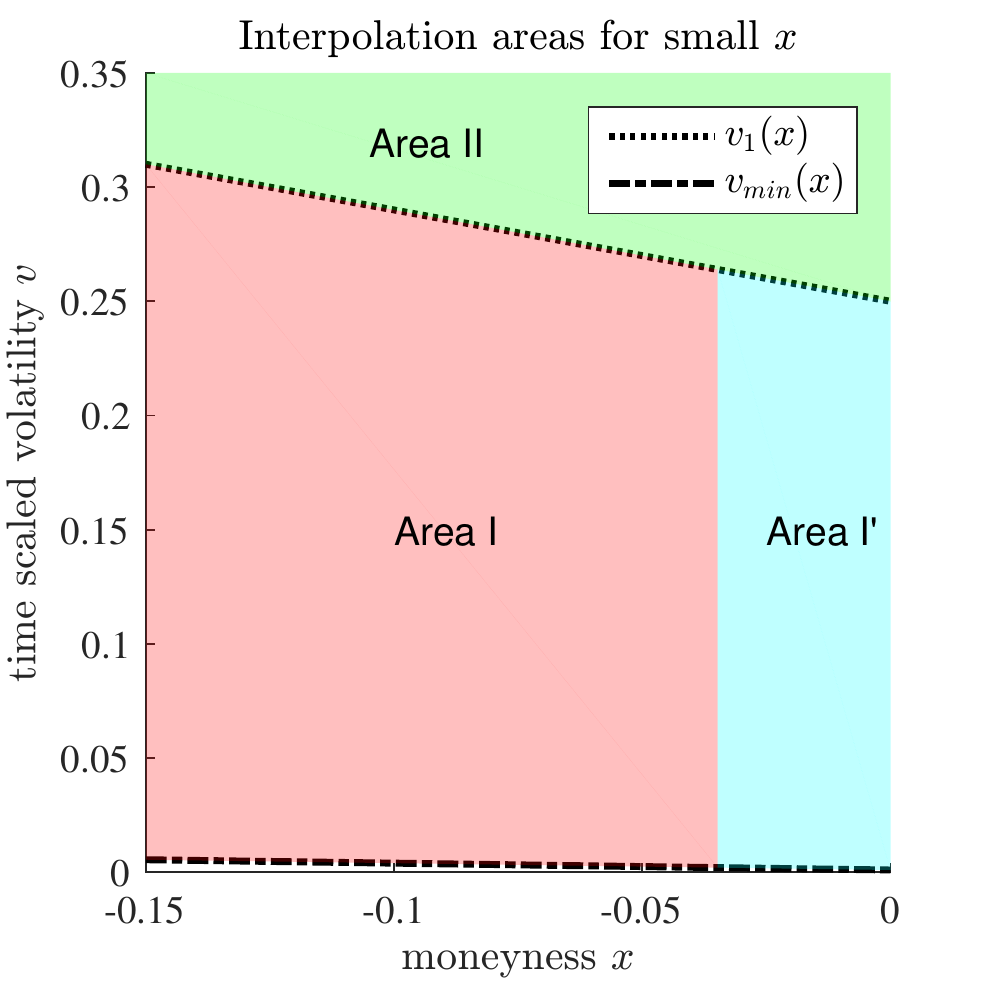}
    \end{minipage}
\captionof{figure}{The four different interpolation areas of the Chebyshev method.}\label{fig:PlotAreas}%
\end{figure}
Moreover, we replace the boundary call prices $c_1(x)$, $c_2(x)$ and $c_{max}(x)$ by univariate interpolations to reduce the runtime further. The evaluation of $c_{min}(x)$, however, is done directly, since for low volatilities the call price is hard to approximate. For this step we use the implementation of the call price function provided in \cite{Jaeckel2015}, which is of very high precision.

\subsection{Algorithmic structure} \label{section:algorithm}
Our method allows for an online/offline decomposition:
\begin{itemize}
	\item \textit{offline-phase} (preparation):\\
		In each area, we compute the implied volatilities on a $N\times N$ grid of Chebyshev points. Then we apply the \textit{chebfun2} algorithm with pre-specified accuracy and obtain a low-rank approximation.
	\item \textit{online-phase} (real-time evaluation):\\
		In the online phase implied volatilities are computed from real-time data, containing a vector of call prices $C\in\IR^n$ and the corresponding strikes $K\in\IR^n$, spot prices $S_0\in\IR^n$, maturities $T\in\IR^n$ and interest rates $r\in\IR^n$.
		\begin{itemize}
			\item \textit{Normalization}: We calculate the normalized call price $c$ and the forward moneyness $x$ from the data. Option prices with $x>0$ need to be transformed to prices with moneyness $-x$ by Formula \eqref{ITM_option_formula}.
			\item \textit{Splitting}: For each pair $(x,c)$, we need to find the corresponding area. As the computation of $c_{min}(x)$ requires the most computational effort, we proceed as follows. First, we compute $c_{max}(x)$ and check if $c\leq c_{max}(x)$. Next, we check if $c<c_{2}(x)$ and eventually $c<c_{1}(x)$. Only in the latter case, do we compute $c_{min}(x)$ and check whether $c\geq c_{min}(x)$.			
			\item \textit{Transformation}: We compute the transformed call prices $\phi_{i,x}(c)$ and moneyness $\phi_x(x)$ with the respective transformations. 
			\item \textit{Evaluation}: We evaluate the bivariate Chebyshev interpolations provided in the \textit{offline-phase} at the transformed call prices and moneyness to obtain the time-scaled implied volatility.
		\end{itemize}
\end{itemize}

The runtime of the \textit{online}-phase is primarily determined by the \textit{splitting} and the \textit{evalutation}-phase. 
The evaluation of the bivariate interpolations can be done in different ways and can be performed in very few computational steps depending on the required accuracy.\\
%
%

For optimal efficiency in the evaluation step, we consider a bivariate Chebyshev interpolation of a function $f(x,y)$ in the low rank form $I^{N_1,N_2}=\sum_{j=1}^{k}d_j c_j(y) r_j(x)$ where $r_j(x)$ and $c_j(y)$ are univariate Chebyshev interpolations of rank $N_1$ and $N_2$. More precisely,
\[
	r_j(x)=\sum_{i=0}^{N_1-1}a_i T_i(x) \text{ and } c_j(y)=\sum_{i=0}^{N_2-1}b_iT_i(y)
\]
The Chebyshev polynomials $T_0,T_1, ...,T_{N_1-1}$ can be computed in different ways, for instance by $T_k(x)=\cos(k\cos^{-1}(x))$ or by the iterative formula $T_0(x)=0$, $T_1(x)=1$, $T_{k+1}(x)=2xT_k(x)-T_{k-1}(x)$. It turns out that for large data sets the iterative evaluation of the Chebyshev polynomials is advantageous compared to the cosine formula as only simple additions and multiplications are involved while the evaluation of $\cos$ and $\cos^{-1}$ is slightly slower. Therefore we use this approach in our implementation. 

After setting up the Chebyshev method for a pre-specified accuracy we obtain a low-rank approximation for each of the four areas. Table \ref{table:ranks} displays the ranks $k$ and the grid sizes $N_{1},N_{2}$ of the low rank interpolation operator for the three specified accuracies $10^{-6}$ (low accuracy), $10^{-9}$ (medium accuracy) and $10^{-12}$ (high accuracy). As expected the ranks and grid sizes are higher for a higher accuracy. Moreover, we observe that we need more interpolation nodes in Area I and Area I' to obtain the same level of accuracy as in Area II and Area III.

\begin{center}
\scriptsize
\centering
\begin{tabular}{@{}*{6}{l}@{}}
\toprule
Area & low accuracy & medium accuracy & high accuracy\\
\midrule
Area I & $k=10$, $N_{1}=25$, $N_{2}=36$ & $k=16$, $N_{1}=46$, $N_{2}=79$ & $k=22$, $N_{1}=67$, $N_{2}=122$\\
Area I' & $k=9$, $N_{1}=27$, $N_{2}=18$ & $k=16$, $N_{1}=51$, $N_{2}=39$ & $k=23$, $N_{1}=77$, $N_{2}=57$\\
Area II & $k=6$, $N_{1}=21$, $N_{2}=20$ & $k=11$, $N_{1}=36$, $N_{2}=33$ & $k=14$, $N_{1}=51$, $N_{2}=47$\\
Area III & $k=5$, $N_{1}=11$, $N_{2}=9$ & $k=7$, $N_{1}=17$, $N_{2}=14$ & $k=9$, $N_{1}=23$, $N_{2}=19$\\
\bottomrule
\end{tabular}\label{table:ranks}
\captionof{table}{Rank $k$ and grid sizes $N_{1},N_{2}$ of the low rank Chebyshev interpolation in the different areas for three different levels of pre-specified accuracy.}
\end{center}

\section{Numerical Results}\label{section_numerik}
We compare our approximation method to
\begin{itemize}
\item the \cite{Jaeckel2015} method,
\item the approximation formula given in \cite{Li2008},
\item the approximation formula given in \cite{Li2008} with the proposed polishing of two Newton-Raphson iterations,
\item the Newton-Raphson algorithm with the starting point given in \cite{ManasterKoehler1982}. The algorithm terminates 
if $|v_n-v_{n-1}|<10^{-6}$.
\end{itemize}

In order to do so, we first choose a domain $\mathcal{D}_{1}$ on which all methods can be applied and compare the resulting errors and runtimes (Section \ref{ComparisonDomain1}). On the complete domain $\mathcal{D}_{2}$, we compare the proposed method to the \cite{Jaeckel2015} method and the Newton-Raphson algorithm as those are the only ones that can also be applied on this set (Section \ref{ComparisonDomain2}). Finally, we include actual market data (Section \ref{ComparisonMarketData}). All codes are written in \textit{Matlab} R2014a and the experiments are run on a computer with Intel Xeon CPU with 3.10 GHz with 20 MB SmartCache.

\subsection{Comparison on Domain $\mathcal{D}_{1}$}\label{ComparisonDomain1}
The domain on which all methods work is the domain of \cite{Li2008} bounded below by $v_{min}(x)$, i.e.
\[
\mathcal{D}_{1}:=\left\{-0.5\leq x\leq 0.5, 0\leq v\leq 1, \max\left(\frac{|x|}{2},v_{min}(-|x|)\right)\leq v\right\}
\]
See Figure \ref{fig:PlotArea_Li} for a comparison of the domain of \cite{Li2008} and the domain of the Chebyshev method.  On $\mathcal{D}_{1}$ we compute normalized call prices on a $1000\times 1000$-grid, where the distribution of the points is determined as in the numerical example of Section \ref{section_simple}. We compare the runtimes and errors in the time-scaled volatilities $\Delta v:=|v-v^{imp}|$ and the repricing errors $\Delta c:=|c(x,v)-c(x,v^{imp})|$ of the methods. Figure \ref{fig:ComparisonDomain1} illustrates the errors $\Delta v$ of the reference methods. Figure \ref{fig:ComparisonDomain1Cheb} displays the errors of the Chebyshev approach for three different pre-specified accuracies.

\begin{center}
\includegraphics[width=10cm]{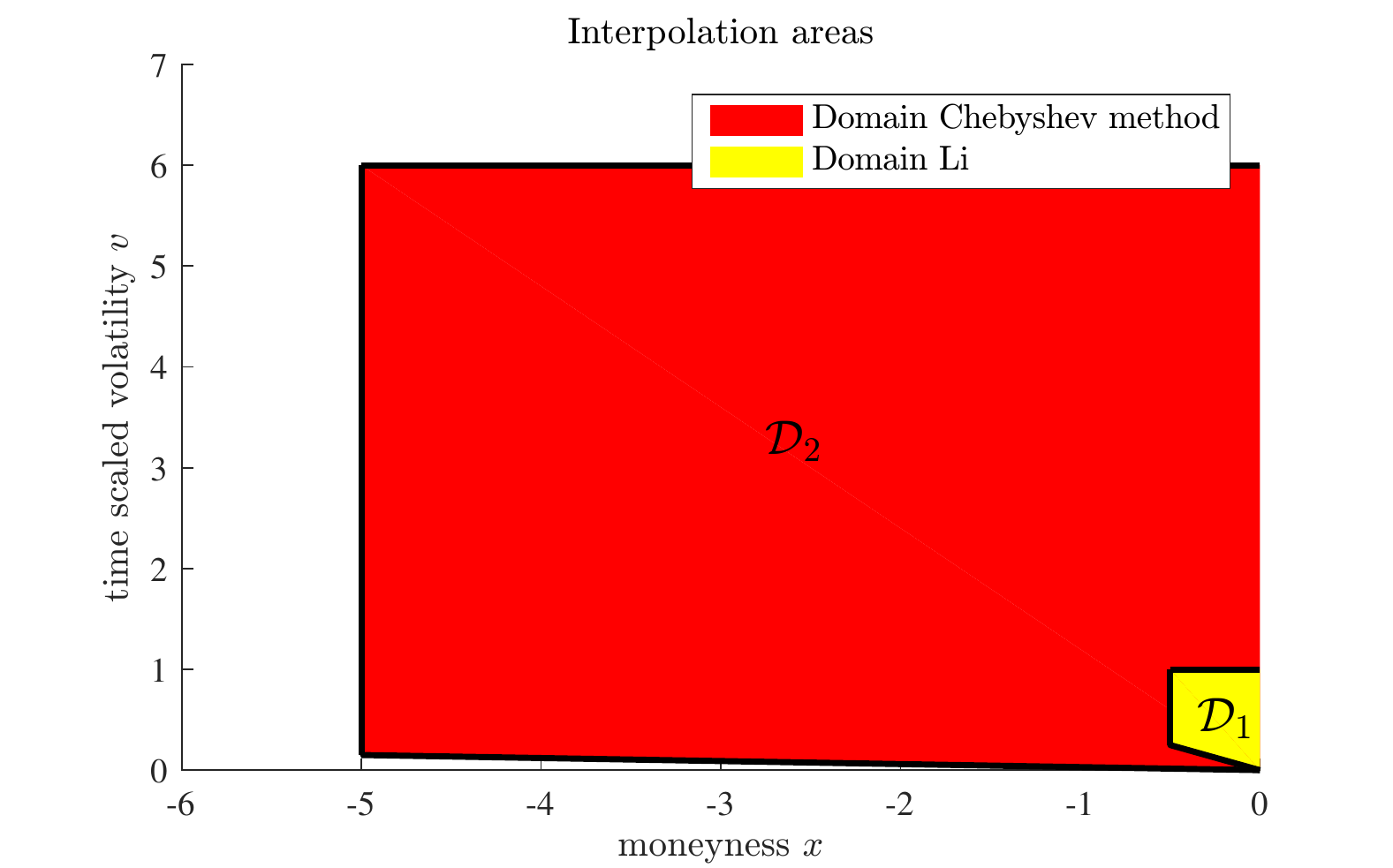}
\captionof{figure}{Domain $\mathcal{D}_{2}$ of the Chebyshev interpolation (red), domain of Li (yellow) and domain $\mathcal{D}_{1}$ as the intersection of both.}\label{fig:PlotArea_Li}%
\end{center}

The \cite{Jaeckel2015} method comes with a solution close to machine precision for all input parameters and thus qualifies as our reference method in the offline-phase of the Chebyshev approximation. Also the Newton-Raphson algorithm reaches very high precision. The approximation of \cite{Li2008}, however, is not able to reach the same range of precision. As Table \ref{table_domain1} shows, the mean error of $\sigma$ is a factor even $10^{10}$ higher than J\"ackel's approximation. The proposed modification of \cite{Li2008} with two additional Newton-Raphson steps reduces the error. However, for low volatilities the effect is rather small and the maximal error is still in the region of $10^{-5}$, see Table \ref{table_domain1}.

Figure \ref{fig:ComparisonDomain1Cheb} displays the interpolation error of the Chebyshev method for three different pre-specified accuracies. The error is of the same order for the whole interpolation domain, which shows that a pre-specified accuracy can be reached for all input parameters with the same complexity. 

\begin{center}
\includegraphics[width=12cm]{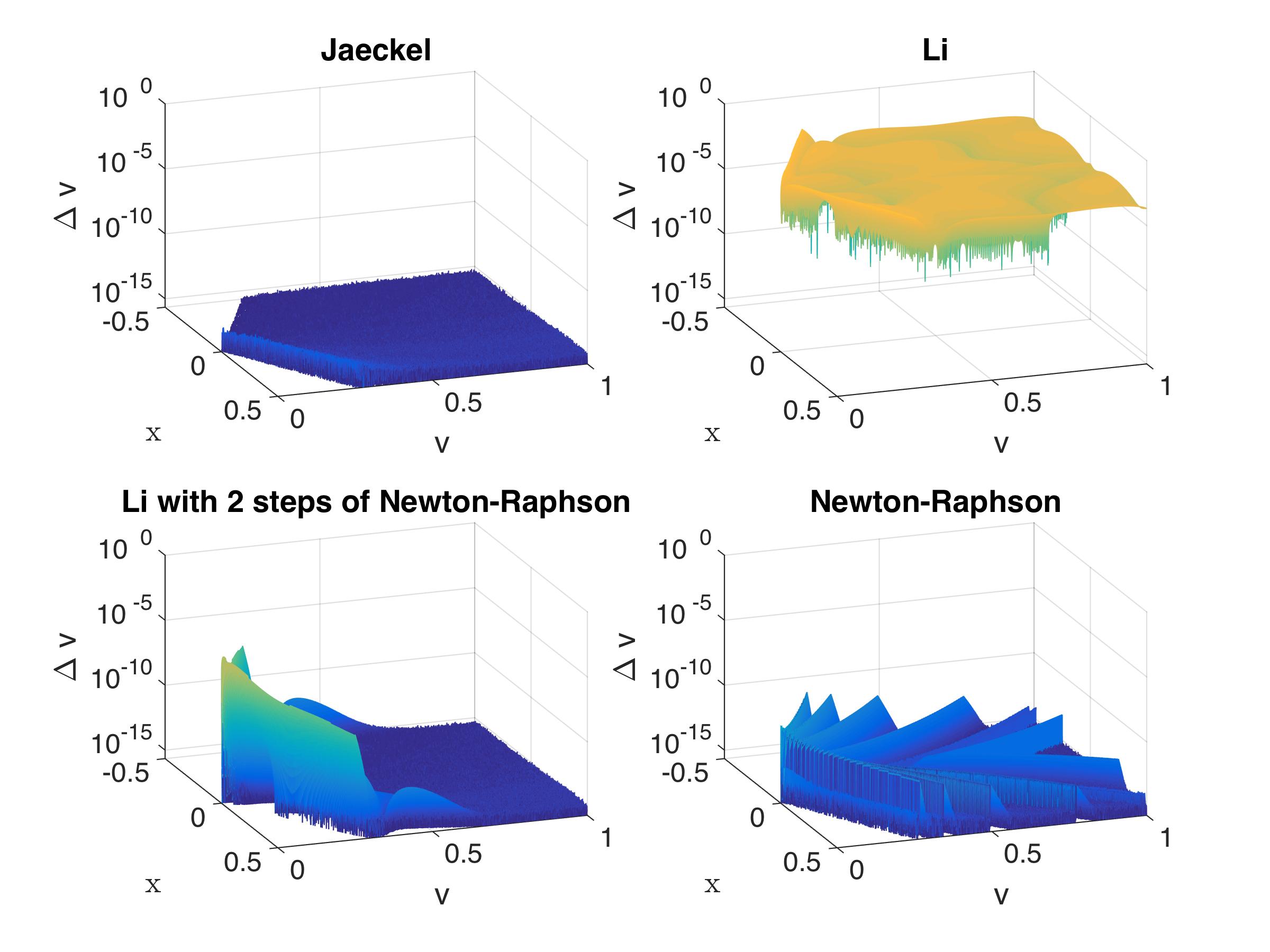}
\captionof{figure}{Errors $\Delta v:=|v-v^{imp}|$ of the reference methods}\label{fig:ComparisonDomain1}%
\end{center}

\begin{center}
\includegraphics[width=12cm]{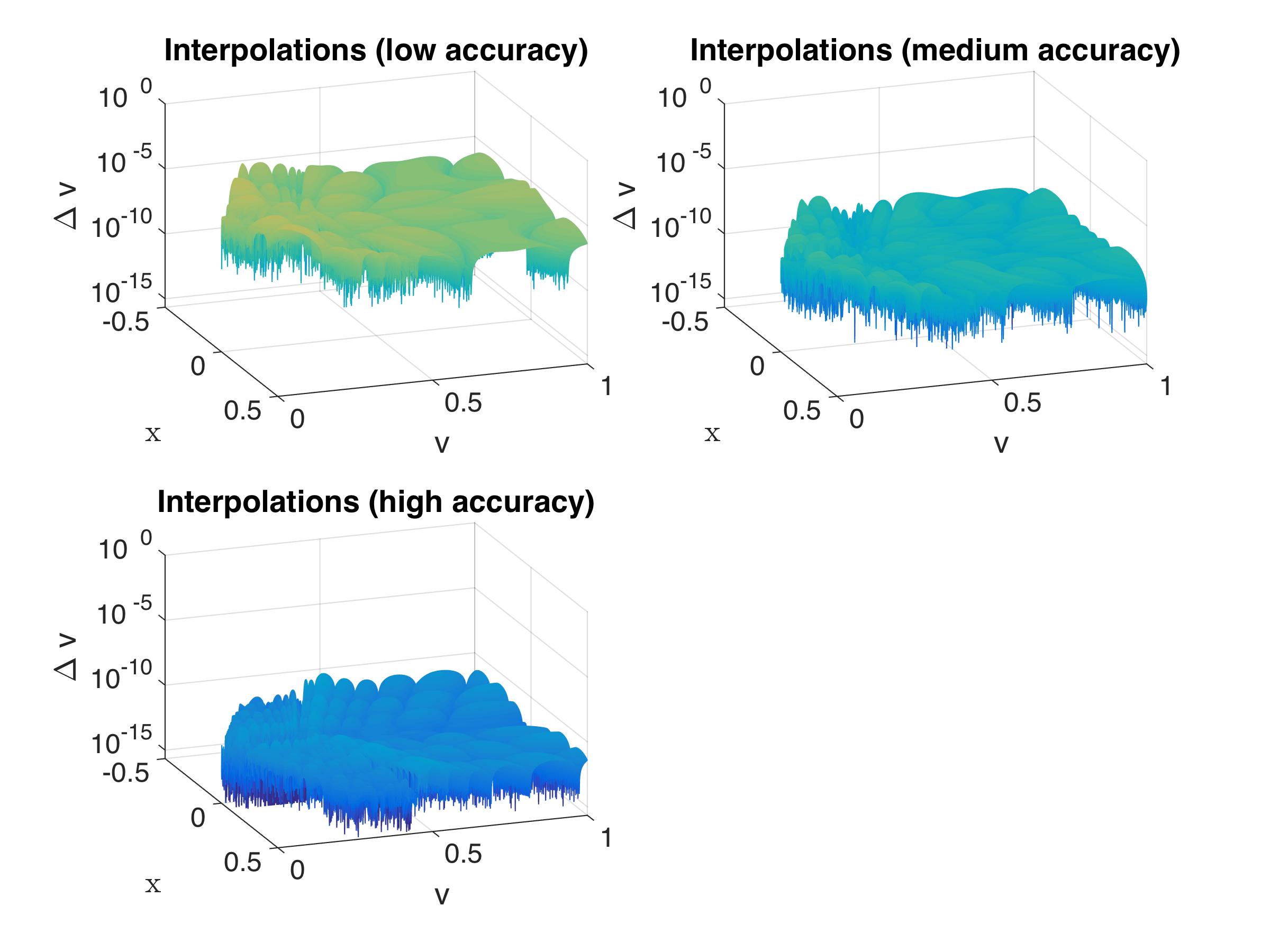}
\captionof{figure}{Errors $\Delta v:=|v-v^{imp}|$ of the Chebyshev approach with three different pre-specified accuracies.}\label{fig:ComparisonDomain1Cheb}%
\end{center}

Table \ref{table_domain1} shows the maximal and the mean error in terms of the time-scaled volatilities and the normalized call prices as well as the runtime as a proportion of the runtime of the Newton-Raphson method, which takes $1.45s$. For the Chebyshev method, the runtime measures the time of the online phase. When comparing the runtimes, the Li method is the fastest. It comes, however, with the lowest precision of a maximal error in $\sigma$ of $3.26\cdot10^{-3}$. For a higher precision in the range of $10^{-5}$, the Chebyshev method with low accuracy turns out to be faster than the improved Li method. Comparing the mean, the same holds for the Chebyshev method with medium accuracy. For very high precisions the Chebyshev method with high accuracy is faster than the Newton-Raphson approach. Compared to J\"ackel's method, the Chebyshev approach is two times faster but with a maximal error of $10^{-11}$ instead of $10^{-14}$.

\renewcommand{\arraystretch}{1.5}
 
\begin{center}
\scriptsize
\centering
\begin{tabular}{@{}*{6}{l}@{}}
\toprule
Method & max $|\Delta\sigma|$ & mean $|\Delta\sigma|$ & max $|\Delta c|$ & mean $|\Delta c|$ & runtime\\
\midrule
J\"ackel &$2.80\cdot10^{-14}$ &$4.57\cdot10^{-16}$ &$1.67\cdot10^{-15}$ &$9.99\cdot10^{-17}$ &$1.39$\\
Li &$3.26\cdot10^{-3}$ &$3.42\cdot10^{-4}$ &$2.15\cdot10^{-4}$ &$9.43\cdot10^{-5}$ &$0.12$\\
Li with 2 steps of Newton-Raphson &$2.02\cdot10^{-5}$ &$6.12\cdot10^{-9}$ &$1.10\cdot10^{-6}$ &$3.89\cdot10^{-10}$ &$0.63$\\
Newton-Raphson &$2.05\cdot10^{-10}$ &$6.32\cdot10^{-14}$ &$2.91\cdot10^{-11}$ &$1.00\cdot10^{-14}$ &$1$\\
Chebyshev method (low accuracy) &$1.52\cdot10^{-5}$ &$1.40\cdot10^{-6}$ &$4.91\cdot10^{-6}$ &$3.94\cdot10^{-7}$ &$0.40$\\
Chebyshev method (medium accuracy) &$3.20\cdot10^{-8}$ &$2.17\cdot10^{-9}$ &$3.52\cdot10^{-9}$ &$5.92\cdot10^{-10}$ &$0.55$\\ 
Chebyshev method (high accuracy) &$4.88\cdot10^{-11}$ &$4.78\cdot10^{-12}$ &$1.51\cdot10^{-11}$ &$1.41\cdot10^{-12}$ &$0.67$\\
\bottomrule
\end{tabular}\label{table_domain1}
\captionof{table}{Interpolation error and runtimes on domain $\mathcal{D}_{1}$.}
\end{center}

\subsection{Comparison on Domain $\mathcal{D}_{2}$}\label{ComparisonDomain2}
We compare the Chebyshev method on the large domain $\mathcal{D}_{2}$ to the Newton-Raphson approach and the algorithm of J\"ackel. The errors and runtimes on a $1000\times1000$ grid, specified as in Section \ref{ComparisonDomain1}, are computed. Figure \ref{fig:ComparisonDomain2} and \ref{fig:ComparisonDomain2Cheb} illustrate the resulting errors of the reference methods and the Chebyshev approach. The observations of the error behaviour on the larger domain $\mathcal{D}_{2}$ are consistent with that on the smaller domain $\mathcal{D}_{1}$, see Figure \ref{fig:ComparisonDomain1} and Figure \ref{fig:ComparisonDomain1Cheb}.

\begin{center}
\includegraphics[width=12cm]{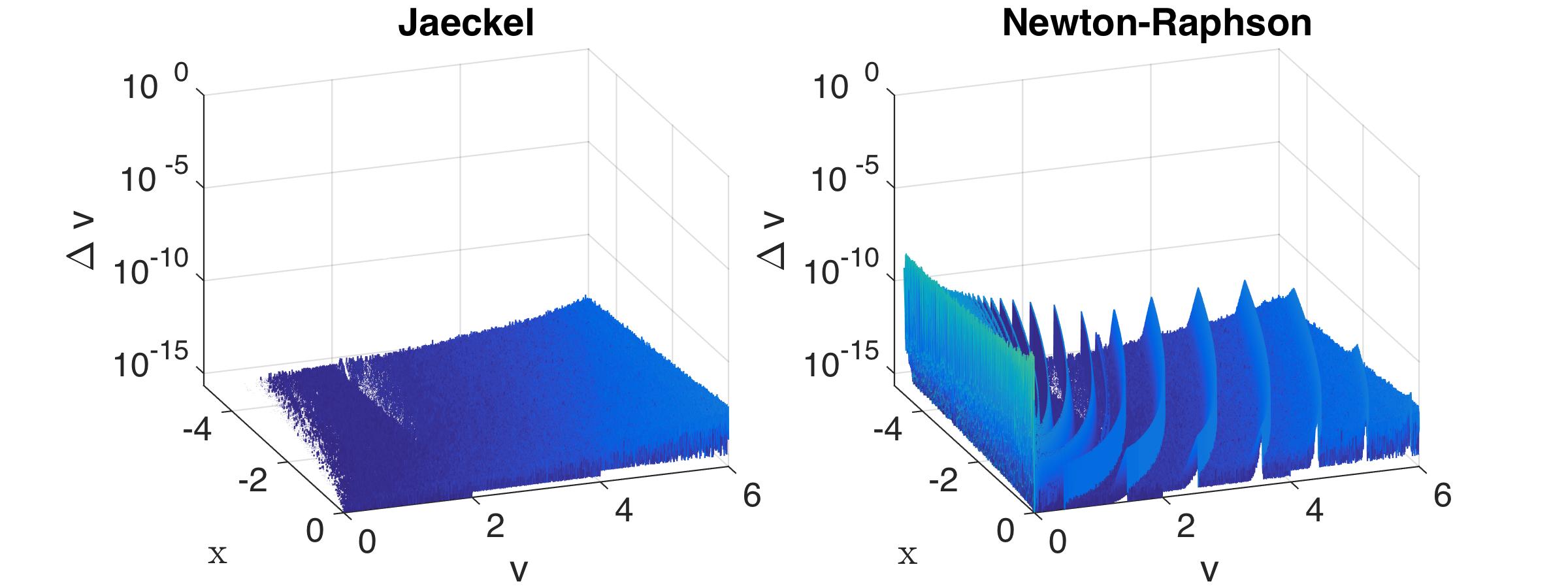}
\captionof{figure}{Errors $\Delta v:=|v-v^{imp}|$ of the reference methods}\label{fig:ComparisonDomain2}%
\end{center}

\begin{center}
\includegraphics[width=12cm]{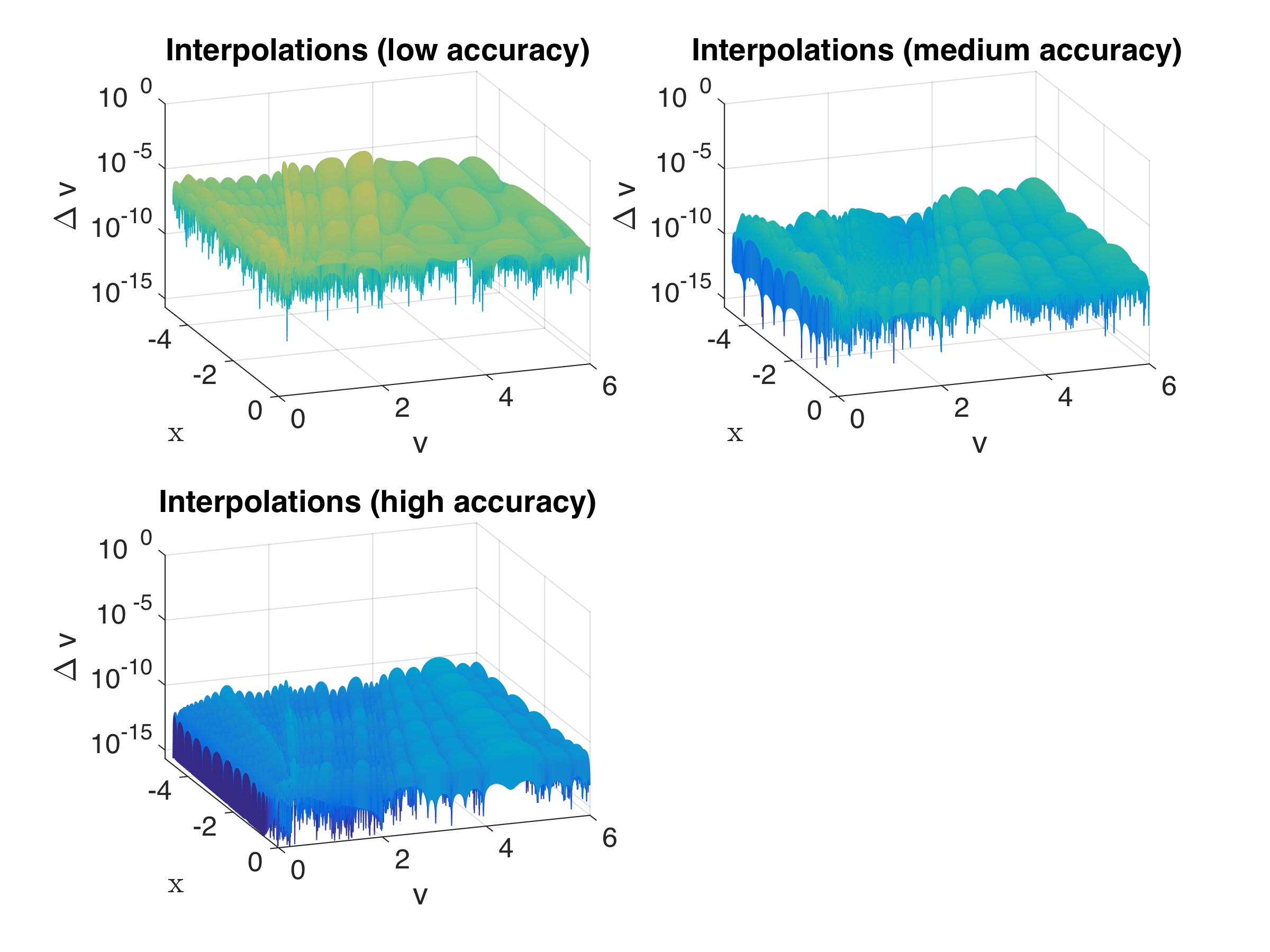}
\captionof{figure}{Errors $\Delta v:=|v-v^{imp}|$ of the reference methods}\label{fig:ComparisonDomain2Cheb}%
\end{center}


Table \ref{table_domain2} shows the maximal and the mean error as well as the runtimes scaled as in \ref{table_domain1}. Here, the Newton-Raphson method takes $4.29s$. 
\begin{center}
\scriptsize
\centering
\begin{tabular}{@{}*{6}{l}@{}}
\toprule
Method & max $|\Delta\sigma|$ & mean $|\Delta\sigma|$ & max $|\Delta c|$ & mean $|\Delta c|$ & runtime\\
\midrule
J\"ackel &$5.30\cdot10^{-13}$ &$5.35\cdot10^{-15}$ &$2.55\cdot10^{-15}$ &$7.10\cdot10^{-17}$ &$0.52$\\
Newton-Raphson &$8.34\cdot10^{-8}$ &$6.64\cdot10^{-12}$ &$1.94\cdot10^{-11}$ &$1.28\cdot10^{-15}$ &$1$\\ 
Chebyshev method (low accuracy) &$2.55\cdot10^{-5}$ &$1.85\cdot10^{-6}$ &$4.63\cdot10^{-6}$ &$1.42\cdot10^{-7}$ &$0.14$\\ 
Chebyshev method (medium accuracy) &$4.42\cdot10^{-8}$ &$2.38\cdot10^{-9}$ &$4.02\cdot10^{-9}$ &$1.36\cdot10^{-10}$ &$0.16$\\ 
Chebyshev method (high accuracy) &$1.66\cdot10^{-10}$ &$1.32\cdot10^{-11}$ &$1.52\cdot10^{-11}$ &$4.83\cdot10^{-13}$ &$0.20$\\
\bottomrule
\end{tabular}\label{table_domain2}
\captionof{table}{Interpolation error and runtimes on domain $\mathcal{D}_{2}$.}
\end{center}

To reach a medium accuracy in the maximal error in the range of $10^{-8}$, the Chebyshev method is more than six times faster than the Newton-Raphson approach. Moreover, the Chebyshev method is able to reach higher accuracies of $10^{-10}$ and still needs only $20\%$ of the runtime of Newton-Raphson. J\"ackel's method reaches very high precisions and is faster than Newton-Raphson. Compared the J\"ackel method, the Chebyshev method allows us to pre-specify accuracies and reduce the runtimes significantly. For example, if accuracies in the region of $10^{-8}$ are sufficient, the Chebyshev method is more than three times faster than J\"ackel's approach.

\subsection{Comparison for market data}\label{ComparisonMarketData}
In Section \ref{section_data_domain} we investigated market data of options and concluded that a significant part of the options is not covered by the domain of \cite{Li2008}. This was the motivation to consider a much larger interpolation domain for the Chebyshev method. An empirical investigation confirms that all the options shown in Figure \ref{fig:Domains_Data} lie within our domain. 

Next, we compare the Chebyshev method on this market data to the Newton-Raphson approach and the algorithm of J\"ackel. The errors and runtimes are computed for options on the S\&P 500 index traded on 7/17/2017 (Source Thomson Reuters Eikon). We use the same options as for Figure \ref{fig:Domains_Data}. To obtain more reliable results for the runtime comparison we compute the implied volatilities of the options $5000$ times. 

Table \ref{table_marketdata} shows the maximal and the mean error as well as the runtimes scaled as in Section \ref{table_domain1}. Here, the Newton-Raphson method takes $5.72s$. 
\begin{center}
\scriptsize
\centering
\begin{tabular}{@{}*{6}{l}@{}}
\toprule
Method & max $|\Delta\sigma|$ & mean $|\Delta\sigma|$ & max $|\Delta c|$ & mean $|\Delta c|$ & runtime\\
\midrule
J\"ackel &$8.05\cdot10^{-16}$ &$1.40\cdot10^{-16}$ &$2.11\cdot10^{-15}$ &$2.43\cdot10^{-16}$ &$0.89$\\
Newton-Raphson &$1.78\cdot10^{-10}$ &$2.91\cdot10^{-12}$ &$7.72\cdot10^{-12}$ &$2.22\cdot10^{-13}$ &$1$\\ 
Chebyshev method (low accuracy) &$1.57\cdot10^{-5}$ &$2.95\cdot10^{-6}$ &$4.44\cdot10^{-6}$ &$4.78\cdot10^{-7}$ &$0.37$\\ 
Chebyshev method (medium accuracy) &$4.19\cdot10^{-8}$ &$3.87\cdot10^{-9}$ &$3.45\cdot10^{-9}$ &$3.98\cdot10^{-10}$ &$0.48$\\ 
Chebyshev method (high accuracy) &$1.73\cdot10^{-11}$ &$2.21\cdot10^{-12}$ &$2.70\cdot10^{-12}$ &$2.91\cdot10^{-13}$ &$0.58$\\
\bottomrule
\end{tabular}\label{table_marketdata}
\captionof{table}{Interpolation error and runtimes for S\&P 500 market data.}
\end{center}
The results are similar to those of Section \ref{ComparisonDomain2}. The Chebyshev method is the fastest of the three methods and reaches the target accuracies. The method is about twice as fast as the Newton-Raphson approach for similar accuracies. Again, J\"ackel's method reaches very high precisions but it is significantly slower than the Chebyshev method.

Besides the observed gain in efficiency the Chebyshev method enjoys conceptual advantages. It delivers a closed-form approximation in a simple polynomial structure. The code is easy to implement and maintain. Moreover, the proposed approach can be applied to other problems of similar structure. The following section illustrates this flexibility.

\section{Laplace implied volatility}
Besides the Black-Scholes implied volatility there are several other models with implied volatilities. To overcome the problems of thin tails in the Black-Scholes model, \cite{Madan2016} proposes the replacement of the density of the normal distribution with a Laplace density. This leads to a model with fatter tails without adding additional parameters. The stock price process in this model is defined by
\begin{align}\label{Laplace_stock_price}
S_{t}=S_{0}\exp\left((r-q)t + X_{t} + \log\left(1-\frac{\sigma^{2}t}{2}\right)\right),
\end{align}
where $X_{t}$ is distributed according to the time-dependent Laplace density
\begin{align}\label{Laplace_density}
g(x,t)=\frac{1}{\sigma\sqrt{2t}}e^{-\frac{\sqrt{2}|x|}{\sigma\sqrt{t}}}, \qquad x\in\mathbb{R}.
\end{align}
The call price in the model is given by
\begin{align*}
C&(S_0,K,r,q,t)=\\
&\begin{cases}
e^{-qt}S_{0}\frac{e^{-(\sqrt{2}-\sigma\sqrt{t})|d|}}{2}\left(1+\sigma\sqrt{t/2}\right) - e^{-rt}K\frac{e^{-\sqrt{2}|d|}}{2}, & d>0\\
Ke^{-rT}\left(\frac{e^{-\sqrt{2}|d|}}{2}-1\right)-S_0e^{-qt}\left(\frac{e^{-(\sqrt{2}+\sigma\sqrt{t})|d|}}{2}\left(1-\sigma\sqrt{t/2}\right)-1\right),& d<0
\end{cases}
\end{align*}
with
\begin{align*}
d=\frac{\log(K/S_{0})}{\sigma\sqrt{t}}-\frac{(r-q)\sqrt{t}}{\sigma}-\frac{\log(1-\frac{\sigma^{2}t}{2})}{\sigma\sqrt{t}}.
\end{align*}
\cite{Madan2016} shows that this model can be used for hedging purposes and outperforms classical delta hedging in the Black-Scholes model. \cite{MadanWang2016} considered the application of the model to risk management. For both, hedging and risk management, it is necessary to have a fast and accurate formula for the Laplace implied volatility. To this end, we apply the bivariate Chebyshev method to implied volatilities based on the Laplace density.

As in the previous case, we normalize the call price by setting $v=\sigma\sqrt{T}$, $x=\log(S_0e^{(r-q)T}/K)$ and $C(S_0,K,r,q,t)=\sqrt{S_0e^{-(r+q)T}K}c(x,v)$ to 

\begin{align}\label{Laplace_call_price_normalized}
c(x,v)=
\begin{cases}
\frac{e^{-(\sqrt{2}-v)|d|+x/2}}{2}(1+v/\sqrt{2})-\frac{e^{\sqrt{2}|d|-x/2}}{2}, & d>0\\
e^{-x/2}\left(\frac{e^{-\sqrt{2}|d|}}{2}-1\right)-e^{x/2}\left(\frac{e^{-(\sqrt{2}+v)|d|}}{2}\left(1-v/\sqrt{2}\right)-1\right),& d<0
\end{cases}
\end{align}
with
\begin{align*}
d=-\frac{x}{v}-\frac{\log{(1-\frac{v^2}{2})}}{v}.
\end{align*}

Similar to Section \ref{section_norm_bs_price}, we thus have reduced the approximation problem to a bivariate interpolation. For the domain $0.25\leq v\leq1$, $-0.4\leq x\leq0$, we perform a bivariate Chebyshev interpolation of the Laplace implied volatility. At the interpolation nodes a Brent-Dekker algorithm is used to compute the implied volatilities. Figure \ref{fig:TestLaplace_OutOfMoney} shows the exponential error decay of the interpolation on a $N\times N$-Chebyshev grid. The interpolation is already in the region of $10^{-11}$ for $N=50$. This shows the high potential of the method in the Laplace model, comparable to the numerical example in Section \ref{section_simple}. In order to obtain high efficiency on a larger domain, one can establish a splitting procedure with appropriate scaling functions by exploiting the limit behaviour of the Laplace call price, in the spirit of Section \ref{section_splitting}. 

\begin{figure}[htb]
    \centering
    \begin{minipage}{0.45\linewidth}
        \centering
        \includegraphics[width=7cm]{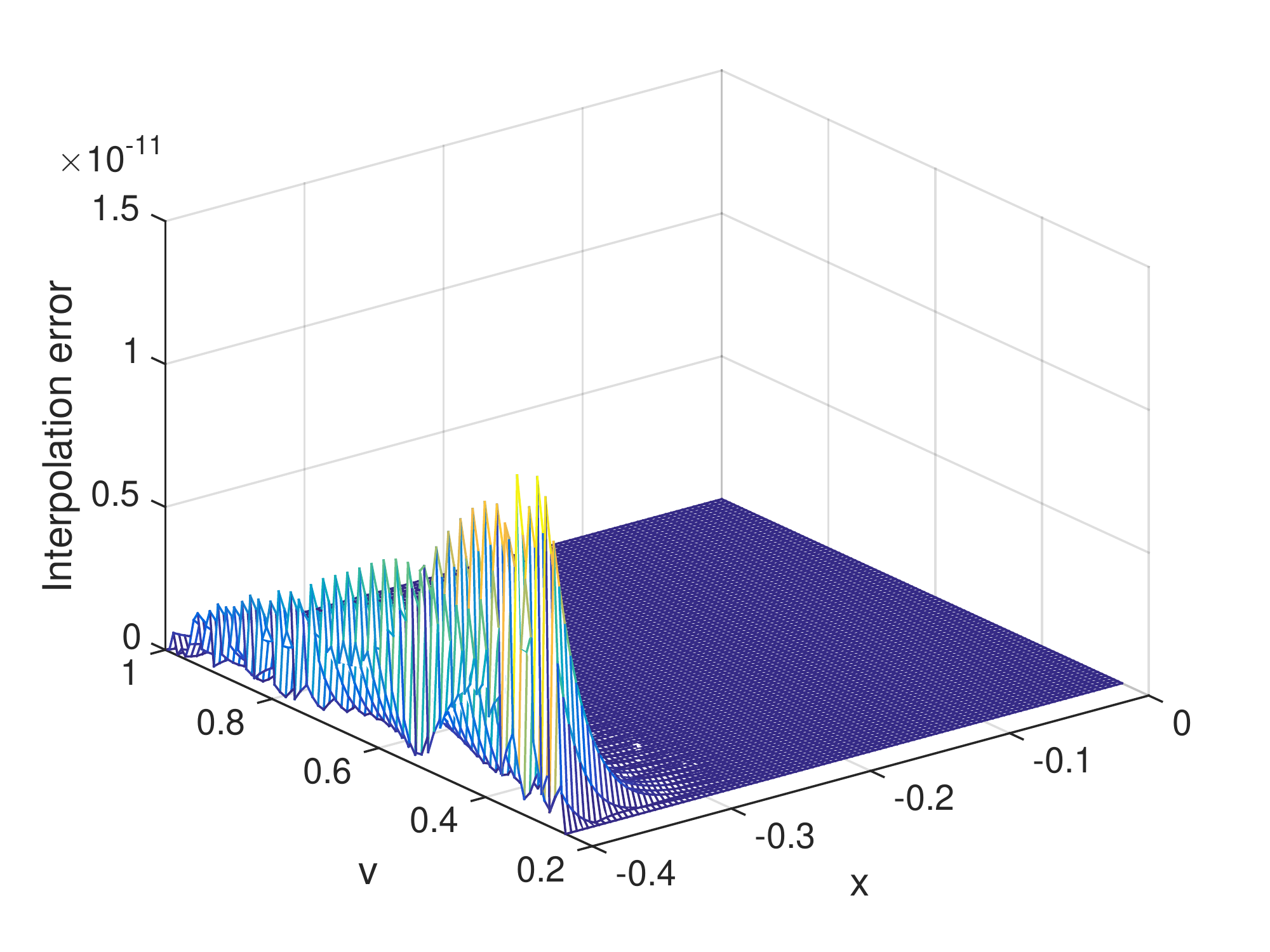}
    \end{minipage}
    \begin{minipage}{0.45\linewidth}
        \centering
        \includegraphics[width=7cm]{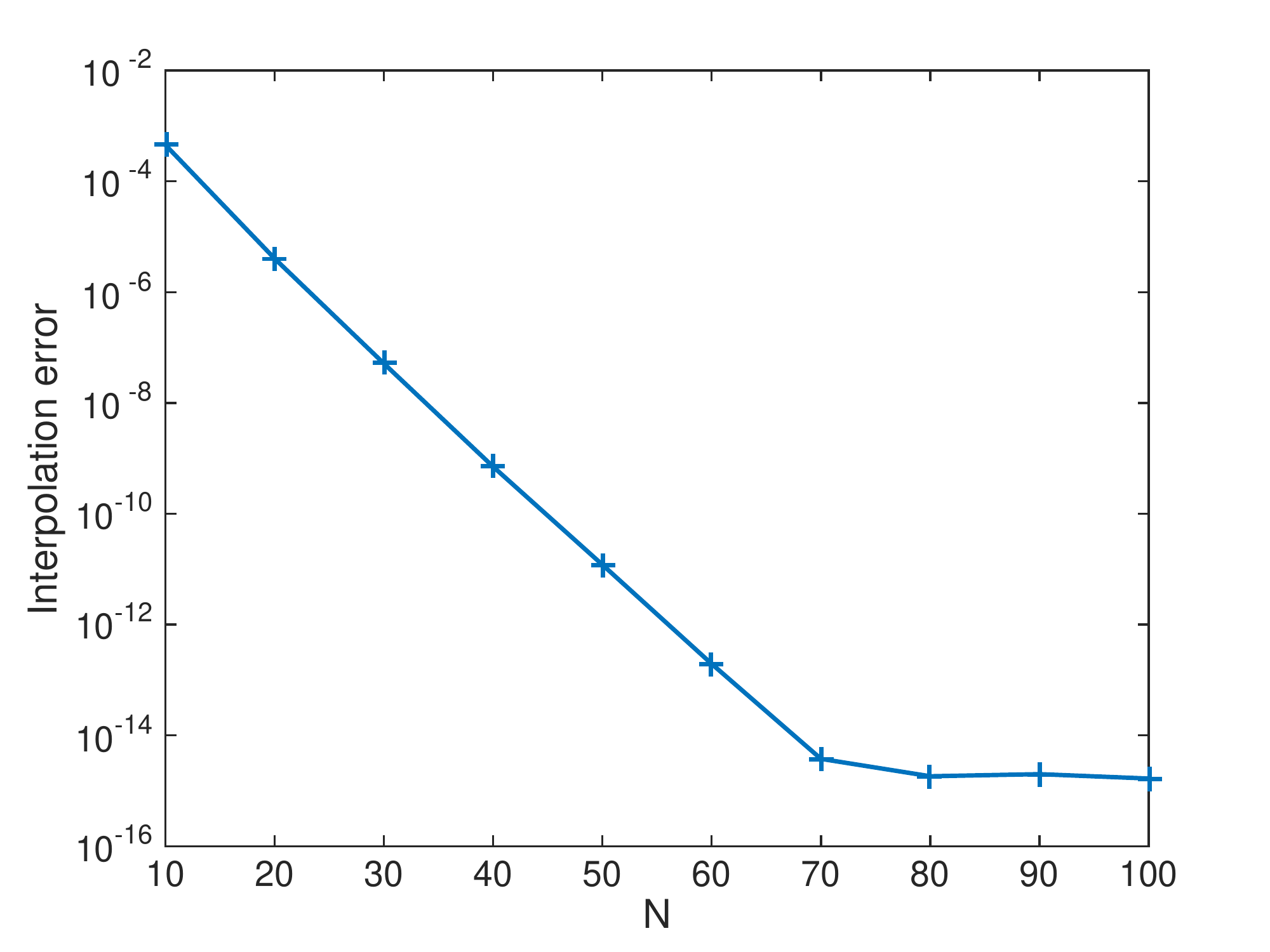}
    \end{minipage}
\captionof{figure}{Exponential error decay of the bivariate interpolation of the Laplace implied volatility on a $N\times N$-Chebyshev grid for $0.25\leq v\leq1$, $-0.4\leq x\leq0$. For $N=50$ the error is already in the region of $10^{-11}$.}\label{fig:TestLaplace_OutOfMoney}%
\end{figure}

%


\section{Conclusion}
We have introduced a new approximation method to compute the implied volatility. The backbone of the method is a bivariate Chebyshev interpolation. We have set up an interpolation domain, which is able to cover all relevant options based on observed market data. In order to achieve highest efficiency, we have split the domain into different interpolation areas with appropriate scaling functions. A theoretical error analysis shows subexponential convergence and a combination with low-rank techniques allows us to enhance the observed efficiency. Compared to other non-iterative approximation methods, the Chebyshev method is able to cover all relevant option data, including deep in and far out of the money options as well as low and high volatilities, see Figure \ref{fig:Domains_Data} and Figure \ref{fig:PlotArea_Li}. Moreover, numerical experiments show that the Chebyshev method achieves considerably higher accuracies on the common domain $\mathcal{D}_{1}$. In comparison to the iterative method of \cite{Jaeckel2015}, the Chebyshev method can reduce the runtimes significantly by pre-specifying the target accuracy. Besides the gain in efficiency, the Chebyshev method exhibits conceptual advantages:
\begin{itemize}
	\item \textit{Closed form bivariate approximation formula}: The Chebyshev interpolations in all areas have the \textit{polynomial structure}
		\begin{align}\label{approximation_formula_impvol}
		v(c,x)\approx\sum_{j=1}^k d_j c_j(\phi_i(c)) r_j(\phi_x(x))
		\end{align}
		where $\phi_i$ and $\phi_x$ are the transformations on the respective area. This structure can be further explored to express derivatives in a simple form. For example the first derivative with respect to the call price is given by 
		\[
		\frac{\partial}{\partial c } v(c,x) =\left(\frac{\partial}{\partial v} c(v,x)\right)^{-1}=\sum_{j=1}^k d_j r_j(\phi_x(x))\cdot\frac{\partial}{\partial \phi_i(c)}c_j(\phi_i(c))\cdot\frac{\partial}{\partial c}\phi_i(c) 
		\]
We observe that the approximate derivative is again a function of $x$ and $c$ in polynomial structure. In particular, this avoids the computation of the implied volatility itself.
	\item \textit{Easy Implementation}: Once the interpolation operator is set up in an offline phase, the polynomial structure of the approximation formula \ref{approximation_formula_impvol} leads to simple code. This facilitates the transfer of the code to other systems and programming languages as part of the maintenance.
	\item \textit{Adaptability}: The efficiency of the Chebyshev method can be even further improved by incorporating additional knowledge. If the option data of interest lies in a domain smaller than $\mathcal{D}_{2}$, one can tailor the method to this domain by modifying the splitting.
 \end{itemize}

The Chebyshev method enjoys high flexibility and the approach can be transferred to similar problems. We have illustrated this by approximating the Laplace implied volatility.

\bibliographystyle{chicago}
  \bibliography{Diss_Literature}

\begin{thebibliography}{}

\bibitem[\protect\citeauthoryear{Baumeister}{Baumeister}{2013}]{Baumeister2013}
Baumeister, J. (2013).
\newblock Inverse problems in finance.
\newblock In {\em Recent Developments in Computational Finance: Foundations,
  Algorithms and Applications}, pp.\  81--157. World Scientific.

\bibitem[\protect\citeauthoryear{Black and Scholes}{Black and
  Scholes}{1973}]{BlackScholes1973}
Black, F. and M.~Scholes (1973).
\newblock The pricing of options and other liabilities.
\newblock {\em Journal of Political Economy\/}~{\em 81}, 637--654.

\bibitem[\protect\citeauthoryear{Brenner and Subrahmanyan}{Brenner and
  Subrahmanyan}{1988}]{BrennerSubrahmanyan1988}
Brenner, M. and M.~G. Subrahmanyan (1988).
\newblock A simple formula to compute the implied standard deviation.
\newblock {\em Financial Analysts Journal\/}~{\em 44\/}(5), 80--83.

\bibitem[\protect\citeauthoryear{Chambers and Nawalkha}{Chambers and
  Nawalkha}{2001}]{ChambersNawalkha2001}
Chambers, D.~R. and S.~K. Nawalkha (2001).
\newblock An improved approach to computing implied volatility.
\newblock {\em Financial Review\/}~{\em 36\/}(3), 89--100.

\bibitem[\protect\citeauthoryear{Chance}{Chance}{1996}]{Chance1996}
Chance, D.~M. (1996).
\newblock A generalized simple formula to compute the implied volatility.
\newblock {\em Financial Review\/}~{\em 31\/}(4), 859--867.

\bibitem[\protect\citeauthoryear{Corrado and Miller}{Corrado and
  Miller}{1996}]{CorradoMiller1996}
Corrado, C.~J. and T.~W. Miller (1996).
\newblock A note on a simple, accurate formula to compute implied standard
  deviations.
\newblock {\em Journal of Banking \& Finance\/}~{\em 20\/}(3), 595--603.

\bibitem[\protect\citeauthoryear{Delbourgo and Gregory}{Delbourgo and
  Gregory}{1985}]{DelbourgoGregory1985}
Delbourgo, R. and J.~A. Gregory (1985).
\newblock Shape preserving piecewise rational interpolation.
\newblock {\em SIAM journal on scientific and statistical computing\/}~{\em
  6\/}(4), 967--976.

\bibitem[\protect\citeauthoryear{Fritzsche and Grauert}{Fritzsche and
  Grauert}{2012}]{FritzscheGrauert2012}
Fritzsche, K. and H.~Grauert (2012).
\newblock {\em From holomorphic functions to complex manifolds}, Volume 213.
\newblock Springer Science \& Business Media.

\bibitem[\protect\citeauthoryear{Ga{\ss}, Glau, Mahlstedt, and Mair}{Ga{\ss}
  et~al.}{2015}]{GassGlauMahlstedtMair2015}
Ga{\ss}, M., K.~Glau, M.~Mahlstedt, and M.~Mair (2015).
\newblock Chebyshev interpolation for parametric option pricing.
\newblock {\em arXiv preprint arXiv:1505.04648\/}.

\bibitem[\protect\citeauthoryear{Higham}{Higham}{2004}]{Higham2004}
Higham, N.~J. (2004).
\newblock The numerical stability of barycentric lagrange interpolation.
\newblock {\em IMA Journal of Numerical Analysis\/}~{\em 24\/}(4), 547--556.

\bibitem[\protect\citeauthoryear{J{\"a}ckel}{J{\"a}ckel}{2006}]{Jaeckel2006}
J{\"a}ckel, P. (2006).
\newblock By implication.
\newblock {\em Wilmott\/}~{\em 26}, 60--66.

\bibitem[\protect\citeauthoryear{J{\"a}ckel}{J{\"a}ckel}{2015}]{Jaeckel2015}
J{\"a}ckel, P. (2015).
\newblock Let's be rational.
\newblock {\em Wilmott\/}~{\em 2015\/}(75), 40--53.

\bibitem[\protect\citeauthoryear{Li}{Li}{2008}]{Li2008}
Li, M. (2008).
\newblock Approximate inversion of the black--scholes formula using rational
  functions.
\newblock {\em European Journal of Operational Research\/}~{\em 185\/}(2),
  743--759.

\bibitem[\protect\citeauthoryear{Lorig, Pagliarani, and Pascucci}{Lorig
  et~al.}{2014}]{LorigPagliaraniPascucci2014}
Lorig, M., S.~Pagliarani, and A.~Pascucci (2014).
\newblock A taylor series approach to pricing and implied volatility for
  local-stochastic volatility models.
\newblock {\em The Journal of Risk\/}~{\em 17\/}(2), 3.

\bibitem[\protect\citeauthoryear{Madan}{Madan}{2016}]{Madan2016}
Madan, D.~B. (2016).
\newblock Adapted hedging.
\newblock {\em Annals of Finance\/}~{\em 12\/}(3-4), 305--334.

\bibitem[\protect\citeauthoryear{Madan and Wang}{Madan and
  Wang}{2016}]{MadanWang2016}
Madan, D.~B. and K.~Wang (2016).
\newblock Laplacian risk management.
\newblock {\em Finance Research Letters\/}.

\bibitem[\protect\citeauthoryear{Manaster and Koehler}{Manaster and
  Koehler}{1982}]{ManasterKoehler1982}
Manaster, S. and G.~Koehler (1982).
\newblock The calculation of implied variances from the black-scholes model: A
  note.
\newblock {\em The Journal of Finance\/}~{\em 37\/}(1), 227--230.

\bibitem[\protect\citeauthoryear{Merton}{Merton}{1973}]{Merton1973}
Merton, R.~C. (1973).
\newblock Theory of rational option pricing.
\newblock {\em The Bell Journal of economics and management science\/},
  141--183.

\bibitem[\protect\citeauthoryear{Pistorius and Stolte}{Pistorius and
  Stolte}{2012}]{PistoriusStolte2012}
Pistorius, M. and J.~Stolte (2012).
\newblock Fast computation of vanilla prices in time-changed models and implied
  volatilities using rational approximations.
\newblock {\em International Journal of Theoretical and Applied Finance\/}~{\em
  15\/}(04), 1250031.

\bibitem[\protect\citeauthoryear{Salazar~Celis}{Salazar~Celis}{2017}]{Celis2017}
Salazar~Celis, O. (2017).
\newblock A parametrized barycentric approximation for inverse problems with
  application to the {B}lack--{S}choles formula.
\newblock {\em IMA Journal of Numerical Analysis\/}.

\bibitem[\protect\citeauthoryear{Sauter and Schwab}{Sauter and
  Schwab}{2010}]{SauterSchwab2010}
Sauter, S. and C.~Schwab (2010).
\newblock {\em Boundary Element Methods, Translated and expanded from the 2004
  German original}, Volume~39.
\newblock Springer Series Computational Mathematics.

\bibitem[\protect\citeauthoryear{Townsend and Trefethen}{Townsend and
  Trefethen}{2013}]{TownsendTrefethen2013}
Townsend, A. and L.~N. Trefethen (2013).
\newblock An extension of chebfun to two dimensions.
\newblock {\em SIAM Journal on Scientific Computing\/}~{\em 35\/}(6),
  C495--C518.

\bibitem[\protect\citeauthoryear{Trefethen}{Trefethen}{2013}]{Trefethen2013}
Trefethen, L.~N. (2013).
\newblock {\em {Approximation Theory and Approximation Practice}}.
\newblock SIAM books.

\end{thebibliography}

\end{document}